\documentclass[10pt,journal,compsoc]{IEEEtran}

\usepackage{booktabs}
\usepackage{xcolor}
\usepackage{color}
\usepackage{bbding}
\usepackage{array, caption, threeparttable}
\usepackage{lettrine}
\usepackage{multirow}
\usepackage{graphicx}
\usepackage{subfigure}
\usepackage{amsmath}
\usepackage{amssymb}
\usepackage{amsthm}
\usepackage{url}
\usepackage{boxedminipage}
\usepackage{algorithmic}
\usepackage[linesnumbered,ruled,vlined]{algorithm2e}
\usepackage{multicol}
\usepackage{epstopdf}
\newcommand{\paratitle}[1]{\vspace{0.5ex}\noindent {\textbf{#1}}}
\newtheorem{definition}{\bf Definition}
\newtheorem{theorem}{\bf Theorem}

\graphicspath{{figs/}}
\newcommand{\ie}{\emph{i.e.,}\xspace}
\newcommand{\eg}{\emph{e.g.,}\xspace}

\newcommand{\eat}[1]{}

\begin{document}

\title{VerifyTL: Secure and Verifiable Collaborative Transfer Learning}
\author{Zhuoran~Ma, Jianfeng~Ma, Yinbin~Miao,
        Ximeng~Liu,~\IEEEmembership{Member,~IEEE}, Wei~Zheng, Kim-Kwang Raymond~Choo,~\IEEEmembership{Senior Member,~IEEE,} and Robert H. Deng,~\IEEEmembership{Fellow,~IEEE}

\IEEEcompsocitemizethanks{
\IEEEcompsocthanksitem Z. Ma, J, Ma, Y. Miao, and W. Zheng are with the School of Cyber Engineering, Xidian University, Xi'an 710071, China; Shaanxi Key Laboratory of Network and System Security, Xidian University, Xi'an 710071, China. E-mail: emmazhr@163.com, jfma@mail.xidian.edu.cn, ybmiao@xidian.edu.cn, zhengwei\_1998@aliyun.com
\IEEEcompsocthanksitem X. Liu is with the Key Laboratory of Information Security of Network Systems, College of Mathematics and Computer Science, Fuzhou University, Fuzhou 350108, China. Email: snbnix@gmail.com
\IEEEcompsocthanksitem K.-K. R. Choo is with the Department of Information Systems and Cyber Security, The University of Texas at San Antonio, San Antonio, TX 78249 USA. Email: raymond.choo@fulbrightmail.org
\IEEEcompsocthanksitem R. H. Deng is with the School of Information Systems, Singapore Management University, Singapore 188065. Email: robertdeng@smu.edu.sg}
}

\IEEEtitleabstractindextext{
\begin{abstract}
Getting access to labelled datasets in certain sensitive application domains can be challenging. Hence, one often resorts to transfer learning to transfer knowledge learned from a source domain with sufficient labelled data to a target domain with limited labelled data. However, most existing transfer learning techniques only focus on one-way transfer which brings no benefit to the source domain. In addition, there is the risk of a covert adversary corrupting a number of domains, which can consequently result in inaccurate prediction or privacy leakage. In this paper we construct a secure and \textbf{Verif}iable collaborative \textbf{T}ransfer \textbf{L}earning scheme, VerifyTL, to support two-way transfer learning over potentially untrusted datasets by improving knowledge transfer from a target domain to a source domain. Further, we equip VerifyTL with a cross transfer unit and a weave transfer unit employing SPDZ computation to provide privacy guarantee and verification in the two-domain setting and the multi-domain setting, respectively. Thus, VerifyTL is secure against covert adversary that can compromise up to $n-1$ out of $n$ data domains. We analyze the security of VerifyTL and evaluate its performance over two real-world datasets. Experimental results show that VerifyTL achieves significant performance gains over existing secure learning schemes.
\end{abstract}

\begin{IEEEkeywords}
Transfer learning, Dishonest majority, Covert security, SPDZ, Convolutional neural network
\end{IEEEkeywords}}

\maketitle

\IEEEraisesectionheading{\section{Introduction}}
\lettrine[lines=2]{W}{ith} the increasing deployment of Internet of Things (IoT) devices and digitalization of our society, the amount of digital data generated and collected will also increase significantly. This also contributes to renewed interest in Artificial Intelligence (AI), such as deep learning techniques. For example, Convolutional Neural Network (CNN)~\cite{lawrence1997face} has been widely used to facilitate image processing, facial recognition and fingerprint identification. The construction of data-driven CNN model typically requires intensive data resources for analysis and recognition. However, sharing data across systems may not be easy in practice, for example due to security and privacy concerns~\cite{miao2019privacy,miao2019secure,gdpr}. In additional, labeled datasets that can be used in AI model training may also be limited in certain sensitive application domains.

Transfer learning~\cite{pan2009survey} can potentially be used to overcome such a limitation, by transferring knowledge learned on one dataset/application domain to another dataset/application domain. However, as shown in Fig.~\ref{intro}, existing transfer learning mechanisms have a number of limitations, such as the following:
a) In conventional transfer learning, a source domain contributes knowledge to a target domain with no payoff. However, such ``selfless" behavior may not be realistic, as data collection, curation, labeling, etc, come at a cost. In other words, there may be a shortage of source datasets that can be used for transfer learning.
b) The transferred knowledge may be vulnerable to inference attacks (e.g., membership attacks~\cite{shokri2017membership} and reconstruction attacks~\cite{oh2019exploring}), which can result in disclosure of the training data~\cite{wang2019beyond}. Existing secure transfer learning schemes over training data~\cite{miao2018enabling,miao2018hybrid}, however, incur significant computation and communication overheads.
c) There may exist a covert adversary~\cite{aumann2007security} in various data domains, which attempts to corrupt the changing set of data domains. The adversary can attempt to compromise the transfer learning with negative transfer~\cite{gui2018negative,gu2017badnets}, for example by executing a malicious computation, and changing the computation results for transfer learning. Thus, a covert adversary can launch a malicious learning with dishonest majority~\cite{bhuiyan2016collusion} by maliciously tuning transfer learning, resulting in transfer learning behaving badly on specific attacker-chosen inputs. Existing secure machine learning schemes are not generally designed to the setting of dishonest majority.

\begin{figure}
  \centering
  \includegraphics[width=2.8in]{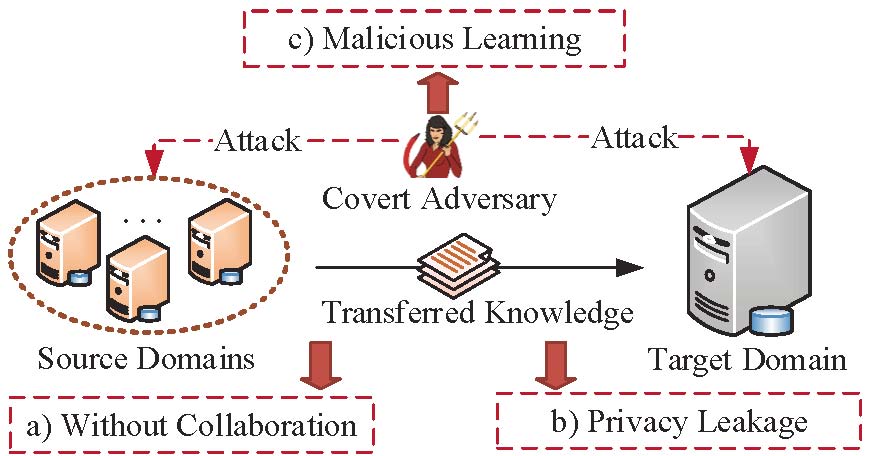}\\
  \caption{Transfer learning framework.}\label{intro}
\end{figure}

The lack of two-way transfer learning and dishonest majority mitigation schemes motivate us to design a secure collaborative training for transfer learning over multiple data-poor domains for implementing covert security in this paper. Specifically, in this paper, we present a {s}ecure and {ver}ifiable {c}ollaborative {transfer} learning against covert adversaries, hereafter referred to as {VerifyTL}. A summary of our contributions is as follows:

\begin{itemize}
  \item \textit{Two-way transfer learning}. We present a collaborative transfer learning framework over multiple data-poor domains, which achieves two-way transfer learning by eliminating the difference between a source domain and a target domain. To tune CNN models on data domains, we provide flexible transfer for collaborative transfer units by setting different knowledge contribution degrees in the respective data domains.

\item \textit{Lightweight privacy preservation}. We propose a lightweight privacy preservation scheme by adopting a locally pre-trained model to extract representations of a data domain. The extracted representations are transferred among data domains to minimize the costs associated with the secure outsourcing of sensitive training data over multiple domains.

  \item \textit{Verifiable learning}. We design a SPDZ-based transfer unit to improve security and support verification. Specifically, we deploy a cross unit and a weave unit, which are two kind of collaborative transfer units, in two-domain setting and multi-domain setting, respectively. The SPDZ-based transfer unit not only securely transfers representations among data domains, but also verifies the correctness of final transferred representations with Message Authentication Code (MAC) to prevent malicious behaviors.

   \item \textit{Covert security}. VerifyTL is a decentralized learning system with covert security, in which a covert adversary can corrupt $n-1$ out of $n$ data domains. Each data domain only trusts itself to prevent the corruption of dishonest majority.

\end{itemize}

In the next two sections, we will review the related literature and introduce relevant background materials. In Section \ref{section:Problem Formulation}, we will introduce the system model of VerifyTL, the threat model of covert security, and the design goals. In Section \ref{section:Proposed VerifyTL}, we present the proposed VerifyTL, prior to evaluating its security and performance in Sections \ref{section:Security Analysis} and \ref{section:Performance Evaluation}. In the last section, we conclude this paper.

\section{Related Work}
Transfer learning has been shown to have potential in the settings where there is a lack of labeled data in one application domain, but knowledge learned from other application domain(s) can be transferred~\cite{pan2009survey,torrey2010transfer}. Earlier approaches mainly focus on transferring the training data from one or more source domains to another~\cite{raina2007self,dai2007boosting}. However, such approaches either incur significant communication costs during the transmission of large amounts of data from the source domain or do not support heterogeneous transfer among different feature distributions. The existing schemes such as those presented in~\cite{yao2010boosting,wang2015online} used a TrAdaBoost approach, which reuses training data of source domains for implementing knowledge transfer. However, TrAdaBoost requires access to training data on both source and target domains. Consequently, the target domain can learn the training dataset of the source domain(s).

Oquab \textit{et al.}~\cite{oquab2014learning,long2018transferable} proposed a CNN-based transfer learning scheme that transfers image representations learned with CNNs on large-scale annotated datasets to other tasks with limited training data. Shin \textit{et al.}~\cite{shin2016deep} designed a transfer learning method that transfers fine-tuning CNN models pre-trained from natural image dataset to medical image tasks for image diagnosis.
Kendall \textit{et al.}~\cite{kendall2018multi} presented a principal approach to multi-task deep learning, which weighs multiple loss functions by considering the homoscedastic uncertainty of each source task. However, these schemes only support one-way transfer learning (\ie knowledge is transferred from a source domain to a task domain). In the event where neither a source domain nor a target domain can collect sufficient labeled data, knowledge is required to be transferred between two data-poor domains. Hence, two-way transfer learning methods such as those presented in~\cite{misra2016cross,hu2018conet} use a cross-stitch network with CNN models for multi-task learning. These methods enable dual knowledge transfers across domains by utilizing cross connections from one task to another and vice versa. However, these two-way transfer learning methods are confined to the two-domain setting, but not the multi-domain setting. In addition, the transferred knowledge may be revealed, for example by successfully carrying out an inference attack over the data representations to reconstruct the training data~\cite{shokri2017membership,wang2019beyond}. In other words, there is a risk of information leakage.

Hence, in recent times, there have been attempts to design privacy-perserving transfer learning approaches, for example by utilizing homomorphic encryption~\cite{ma2019privacy,chen2019fedhealth,salem2019utilizing,liu2018secure} and Multi-Party Computation (MPC)~\cite{bonawitz2017practical,xu2019verifynet}. In~\cite{ma2019privacy}, for example, the training data on each domain are encrypted using homomorphic encryption, prior to been utilized for machine learning. However, the communication overhead increases with the training data size, and a large number of secure computations are required for training over encrypted data (\ie significant computation cost). Other approaches, such as those of~\cite{salem2019utilizing,ma2020pmkt,liu2018secure,chen2019fedhealth}, transfer the model parameters instead of the training data, in order to minimize communication overhead and guarantee data privacy through the model. A homomorphic encryption-based transfer learning scheme is proposed in ~\cite{salem2019utilizing}, which encrypts features extracted from user data and outsourced to honest-but-curious servers. In~\cite{liu2018secure,chen2019fedhealth}, the source domains first pre-train individual models over training datasets, and then encrypt their model parameters for implementing secure outsourcing. These schemes are secure against passive adversaries under the assumption of honest-but-curious entities, where these entities are required to follow the predefined protocols.
Ma \textit{et al.}~\cite{ma2020pmkt} designed a privacy-preserving multi-party knowledge transfer scheme based on decision trees, which preserves the privacy of transferred knowledge from multiple source domains to a target domain. However, in real-world applications, it is not realistic to blindly trust that all entities will strictly follow the protocols. For example, there is the risk of a dishonest majority of source domains who are unwilling to share his/her knowledge and deviates from the predefined protocols during the learning process. In such a scenario, the honest-but-curious assumption will no longer hold; thus, such approaches are potentially vulnerable to the setting of dishonest majority~\cite{damgaard2013practical,liu2018secure,chen2019fedhealth}.

To remove the unrealistic honest-but-curious assumption, the concept of covert security is introduced, which can prevent dishonest majority from deviating the predefined protocols~\cite{aumann2007security,hazay2008efficient}. For example, Zheng \textit{et al.}~\cite{zheng2019helen} presented coopetitive learning (\ie cooperative and competitive) with SPDZ~\cite{araki2018generalizing} to implement covert security, where SPDZ is a practical MPC protocol extended to the dishonest setting. Sharma \textit{et al.}~\cite{sharma2019secure} designed SPDZ-based transfer learning to implement one-way transfer learning with unreliable entities for covert security. However, the scheme only transfers knowledge from the source domain to the target domain in one way, which is clearly unsuitable in data-poor domains as all domains lack learned knowledge.
A comparative summary is presented in Table~\ref{campare}.

\begin{table}[!ht]
 \centering
 \scriptsize
 \caption{Transfer learning approaches: A comparative summary}
 \label{campare}
 \tabcolsep 4pt
 \begin{threeparttable}[b]
 \begin{tabular*}{3.5in}{c||c|c|c|c|c|c}
  \toprule
 \hline
  Approach &\textsf{Fun}$_1$ &\textsf{Fun}$_2$&\textsf{Fun}$_3$&\textsf{Fun}$_4$ &  \textsf{Fun}$_5$&\textsf{Fun}$_6$ \\ \hline
  \cite{yao2010boosting}&SVM&One-way&\Checkmark&\XSolidBrush &\XSolidBrush& \XSolidBrush\\  \hline
  \cite{long2018transferable}&CNN&One-way&\XSolidBrush&\Checkmark  &\XSolidBrush& \XSolidBrush \\ \hline
  \cite{misra2016cross}&CNN&Two-way&\XSolidBrush&\XSolidBrush &\XSolidBrush& \XSolidBrush \\ \hline
  \cite{salem2019utilizing}&CNN&One-way&\Checkmark&\Checkmark &Semi-honest& \XSolidBrush \\ \hline
  \cite{liu2018secure}&Deep learning&One-way&\Checkmark&\Checkmark&Semi-honest& \XSolidBrush \\ \hline
  \cite{zheng2019helen}&Linear model&\XSolidBrush&\Checkmark&\Checkmark&Covert& \Checkmark \\ \hline
  \cite{sharma2019secure}&CNN&One-way&\Checkmark&\Checkmark&Malicious& \XSolidBrush \\ \hline
  {VerifyTL} &CNN&Cross&\Checkmark&\Checkmark&Covert&\Checkmark \\ \hline
  \bottomrule

 \end{tabular*}
  \begin{tablenotes}
\item \textbf{Notes}.  \textsf{Fun}$_1$: Machine learning model; \textsf{Fun}$_2$: One-way or Two-way transfer learning; \textsf{Fun}$_3$: Whether supporting multiple parties or not; \textsf{Fun}$_4$: Whether achieving lightweight transmission or not; \textsf{Fun}$_5$: Semi-honest or Malicious or Covert security model; \textsf{Fun}$_6$: Whether supporting verification or not.
\end{tablenotes}
 \end{threeparttable}
\end{table}

\section{Preliminaries}
We will now briefly describe CNN, transfer learning, and the SPDZ protocol~\cite{hazay2010note} in Sections \ref{subsection:Convolutional Neural Network} to \ref{subsection:SPDZ Protocol}.

\subsection{Convolutional Neural Network (CNN)} \label{subsection:Convolutional Neural Network}
We adopt a CNN as the base model $\mathcal{N}et$ that consists of convolution layers, pooling layers and fully connected layers. Let $\mathcal{X}^0$ and $\mathcal{X}^L=y$  respectively be the input and the desired output, where $L$ is the number of layers and $\mathcal{X}^l$ is the activation map of layer $l\in\{1,...,L\}$.
\begin{itemize}

  \item {Convolution layer $Conv$}: A $Conv$ layer inputs feature maps $\mathcal{X}^{l-1}$ and adopts the sliding convolutional kernels $ker$ for feature extraction. Given an input $\mathcal{X}^{l-1}\in \mathbb{R}^{h_l\times w_l \times c_l}$ in 3rd-order tensors (\ie an array of matrixes) with the height $h_l$, width $w_l$ and channels $c_l$. A $ker$ maps $\mathcal{X}^{l-1}$ to a weighted-sum $\mathcal{X}^{l}$ as defined in $\mathcal{X}^{l}=f(W^l\mathcal{X}^{l-1}),$
where $W^l$ is a weight set of the $l$-th $Conv$ layer.
  \item {Pooling layer $Pool$}: A $Pool$ layer reduces the data dimensions and trainable parameters in the network, and the neurons in this layer are the outputs of a cluster of neurons at the previous layer.
 \item {Activation function $\text{ReLU}$}: The activation function is denoted as a Rectified Linear Unit $\text{ReLU}(x)=\text{max}(0,x)$, which significantly accelerates the training phase and prevents overfitting.
 \item {Full connection layer $Full$}: A $Full$ layer fully connects all its neurons to each neuron at another layer.
 Given an input $\mathcal{X}^{l-1}$, the $l$-th full connection layer outputs $\mathcal{X}^{l}=\text{ReLU}(W^l\mathcal{X}^{l-1}+b^l),$
where $b^l$ is a bias term.

\end{itemize}

\subsection{Transfer Learning}

Transfer learning is a machine learning technique that focuses on acquiring knowledge over data domains (\ie source domains) and repurposing it on a related data domain (\ie target domain). Generally, transfer learning comprises the following three steps:
\begin{itemize}
  \item \textit{Extract knowledge}. A source model is first pre-trained over a source domain, prior to extracting knowledge from training data and repurposing for the target domain.
  \item \textit{Transfer knowledge}. A source domain transfers extracted knowledge to a target domain for the construction of a target model.
  \item \textit{Tune target model}. The target model needs to be refined over the transferred knowledge and the target domain's training data for model optimization.
\end{itemize}

\subsection{SPDZ Protocol} \label{subsection:SPDZ Protocol}
SPDZ protocol, a state-of-the-art MPC protocol, is design to mitigate covert adversaries with secret sharing-based MACs, and tolerate corruption of the majority of parties. More specifically, the SPDZ protocol is divided into online and offline phases. The offline phase performs all computationally expensive public-key operations to create and pre-share triples. The online phase only involves lightweight primitives. The advantages of SPDZ are summarized as follows.

\begin{itemize}
  \item \textit{Privacy}. Given a plaintext $x$, it is converted into $n$ additive shares $x^{(i)}\in \mathbb{Z}_{2^\kappa}$, where $x\equiv\sum x^{(i)} \mod {2^\kappa}$, $\kappa\in\{8,16,32,64,128,...\}$ is the security parameter. The privacy of these shares $x^{(i)}$ is guaranteed by the additive secret sharing.
  \item \textit{Verifiability}. The correctness of all inputs and outputs in SPDZ is verified by the MAC-check mechanism~\cite{damgaard2013practical} with additive secret shares of MACs over the ring of integers $\mathbb{Z}_{2^\kappa}$. For $n$ parties, each party $\mathcal{D}_i$ owns an additive share $\alpha_i\in\mathbb{Z}_{2^\kappa}$ of the fixed MAC key $\alpha$, \ie $\alpha=\alpha_1+ \alpha_2+...+\alpha_n$.
Here, we define $x\in \mathbb{Z}_{2^\kappa}$ is $[\cdot]$-shared when a party holds a tuple $(x^{(i)},\gamma(x)^{(i)})$, where $\gamma(x)^{(i)}$ is an additive share of the corresponding MAC value $\gamma(x)$ as
\begin{equation}
\begin{aligned}\label{maccheck}
\gamma(x)&=\sum \gamma(x)^{(i)} \mod {2^\kappa}
&=\alpha x.
\end{aligned}
\end{equation}

\end{itemize}

\section{System and Threat Models, and Design Goals} \label{section:Problem Formulation}
In this section, we will describe the system and threat models, and the design goals.

\subsection{System Model}
As depicted in Fig.~\ref{systemmodel}, the system model consists of $n$ data-poor domains (\ie $\mathcal{D}_i$, $i\in [1,n]$).
Due to the limitation of training data, it is challenging to construct a high-performance model over a single data domain. Therefore, it is necessary to construct collaborative transfer learning by exchanging extracted knowledge with each other.
Note that a data domain $\mathcal{D}_i$ is not only a source domain for transferring individual knowledge to other domains, but it is also a target domain to leverage exchanged knowledge from the other domains. During the transfer learning phase, there may exist dishonest domains. A dishonest majority of data domains may undermine the learning phase by behaving maliciously to learn the privacy of other domains. Since the confidentiality of exchanged knowledge and learning computation needs to be guaranteed, it is important to construct secure and verifiable collaborative transfer learning.

The entities in our system model perform the following steps. First, each domain $\mathcal{D}_i$ pre-trains a model $\mathcal{N}et_i$. To train the $l$-th layer, $\mathcal{N}et_i$ extracts knowledge over the original data. The extracted knowledge is denoted as data representations, where it is then split into $n$ shares and broadcasted to other $n-1$ domains (Step~\textcircled{\small{1}}). During the collaborative transfer learning phase, a transfer unit realizes secure collaborative transfer learning among $n$ domains, and verifies the computation process to prevent $n-1$ corrupted domains (Step~\textcircled{\small{2}}). Then, the transferred representations are returned to other data domains for tuning each local model (Step~\textcircled{\small{3}}). Steps~\textcircled{\small{1}}-\textcircled{\small{3}} iterate over multiple data domains until a local CNN model is constructed (Step~\textcircled{\small{4}}).

\subsection{Threat Model}\label{sec:threatmodel}
In our threat model, a set of mutually distrustful data domains $\mathcal{D}$ needs to securely implement an agreed computation protocol over their secret inputs. The protocol should be securely executed to implement covert security. It indicates that a changing number of corrupted domains cannot learn additional information, or even lead to incorrect results. To faithfully simulate the adversarial setting, the threat model is defined in the presence of covert adversaries~\cite{aumann2010security}.

\textbf{Covert security model}. Covert adversaries may arbitrarily deviate from the agreed protocol, and at the same time attempting to avoid being caught.  Generally, covert adversaries lie between the following two adversary models, namely: the  \textit{semi-honest} model and the \textit{malicious} model.

\begin{itemize}
  \item \textit{Semi-honest model}. In the passive adversarial setting, a {semi-honest} adversary faithfully follows predefined protocols but may attempt to infer sensitive information from the other domains. Such an adversary cannot collude with other semi-honest domains.

  \item \textit{Malicious model}. In the active adversarial setting, a {malicious} adversary can lead corrupted data domains by arbitrarily deviating from the pre-defined protocol. A group of corrupted data domains can be an arbitrary proportion of $\mathcal{D}$, even $n-1$.
\end{itemize}

\textbf{Note}. The malicious model is stronger than the semi-honest model. However, a semi-honest model is not a special case of the malicious model~\cite{hazay2010note}. The reason is that an adversary in the ideal malicious model can tamper with inputs and outputs, but an adversary in the ideal semi-honest model is not capable of doing so.

\begin{figure}
  \centering
  \includegraphics[width=3.5in]{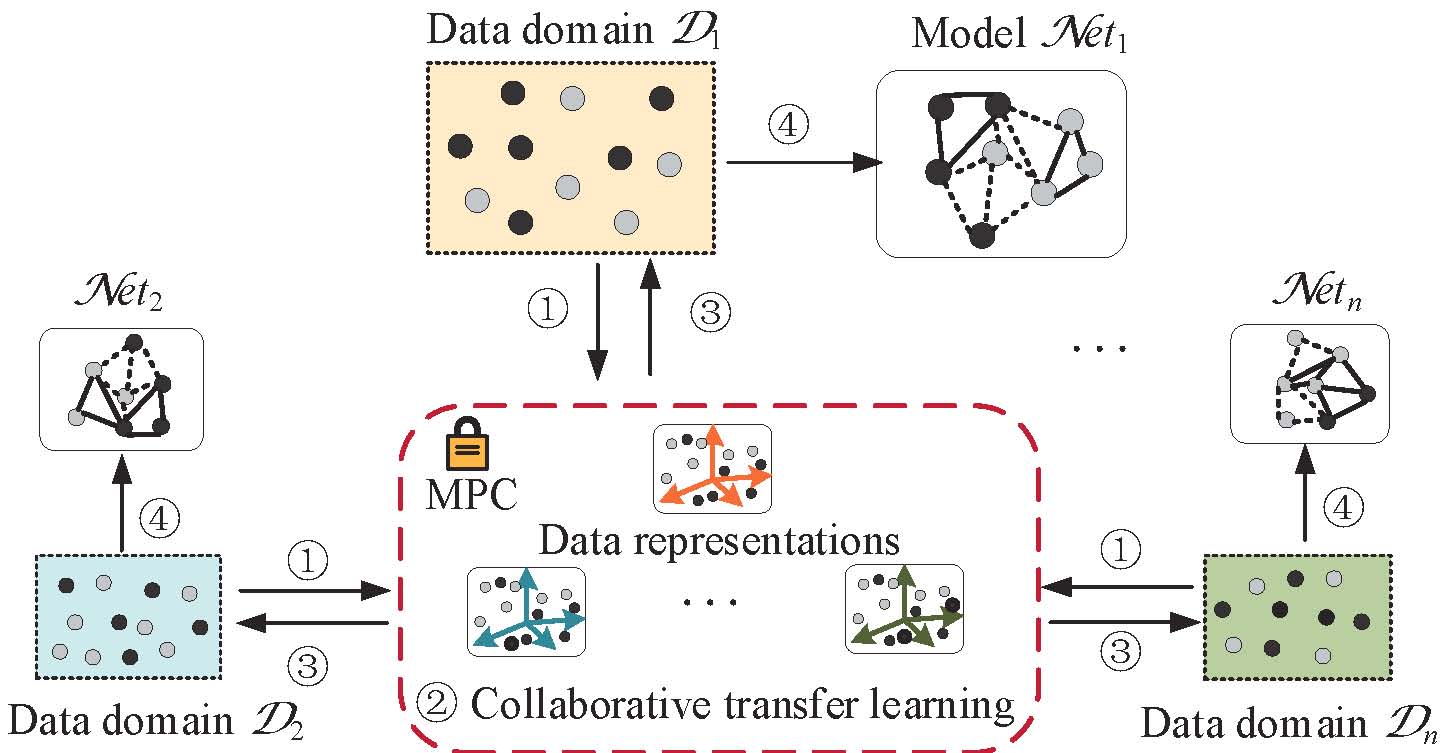}\\
  \caption{System model.}\label{systemmodel}
\end{figure}
\subsection{Design Goals}
To achieve secure and verifiable collaborative transfer learning over multiple data-poor domains, VerifyTL is designed to realize the following goals:
\begin{itemize}

  \item \textit{Covert Security}. To achieve privacy preservation, any data domains in VerifyTL should not learn any other information (including the private inputs and intermediate operations) from the execution process, even in the presence of $n-1$ corrupted domains.
  \item \textit{Verifiability}. Considering that data domains are mutually-untrusted, VerifyTL should verify the correctness of execution process.
  \item \textit{Effectiveness}. Each data domain should play the roles of both source domain and target domain in the collaborative transfer phase, which aims to contribute its knowledge to others and tune its CNN model over the transferred knowledge.
\end{itemize}

\section{Proposed VerifyTL} \label{section:Proposed VerifyTL}
Here, we first summarize a technical overview of VerifyTL, and then design the secure cross unit to implement VerifyTL in the two-domain setting, finally propose the secure weave unit to extend VerifyTL to the multi-domain setting.

\subsection{Technical Overview}
The main motivation behind collaborative transfer learning is that a source domain has no profits during transfer learning.
Thus, we utilize the collaborative transfer learning to realize two-way transfer, which can contribute multi-domain transferred knowledge to tune CNNs.
The core of VerifyTL relies on the following observations:
\begin{itemize}
  \item Data representations on CNNs contain extracted knowledge of original datasets.
  \item According to the correlation extent between two domains, each data domain can set different contribution degree to tune its CNN model over transferred representations from other domains.
  \item A covert adversary can corrupt any data domains, which leads to privacy leakage and malicious computation over dishonest majority.
\end{itemize}

\begin{figure}
  \centering
  \includegraphics[width=3.5in]{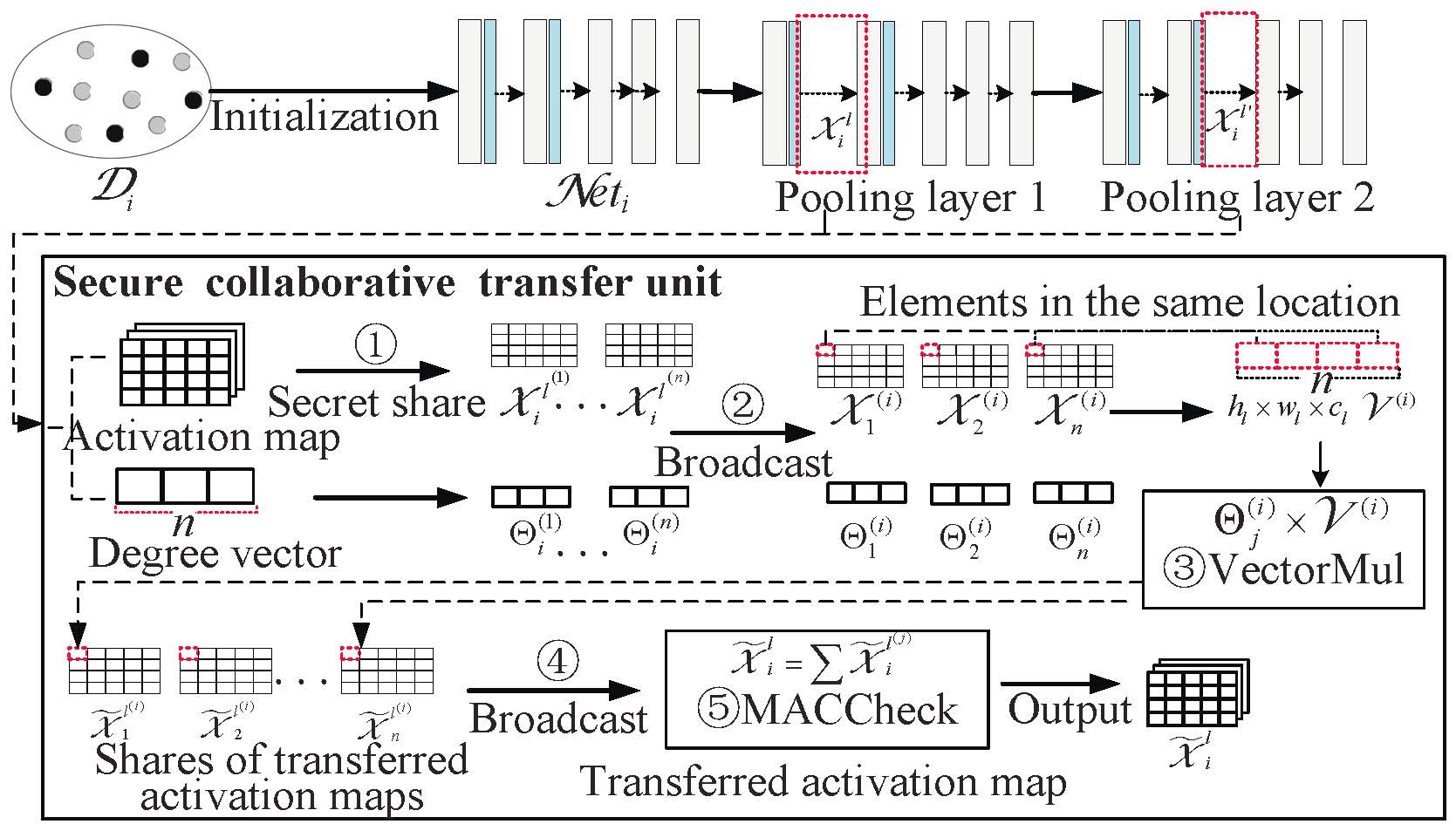}\\
  \caption{Overview of VerifyTL system.}\label{overview}
\end{figure}

In this section, we design the underlying countermeasures based on above observations to achieve VerifyTL:
\begin{itemize}
  \item Data representations contain sensitive information of training data, and thus it is necessary to provide security guarantees.
  \item In each data domain $\mathcal{D}_i$, a degree vector $\Theta_i$ is used to control different contribution degrees of other domains to implement flexible transfer learning.
  \item Collaborative transfer units, \ie two-domain unit (cross unit) and multi-domain unit (weave unit) are proposed to provide secure and verifiable computation for the collaborative transfer learning against dishonest behaviours.
\end{itemize}

Fig.~\ref{overview} illustrates the main process of VerifyTL. Assume that there are $n$ data domains, each of which owns a local training dataset. All data domains agree on the same CNN architecture in advance. Each data domain initializes a CNN model on training data, a pooling layer in a CNN model extracts activation maps as the the inputs of secure collaborative transfer learning. The SPDZ-based cross and weave units are responsible for maintaining secure and verifiable collaborative transfer.
To address the issue of privacy leakage, we propose $\Pi_{\text{CrossUnit}}^{\text{SPDZ}}$ and $\Pi_{\text{WeaveUnit}}^{\text{SPDZ}}$ to protect the transferred data representations and intermediate computation results under the settings of two-domain and multi-domain, respectively (Step $\textcircled{1}$-$\textcircled{4}$).
To avoid the threat of malicious behaviors, we design a \textsf{MACCheck} mechanism to verify the correctness of inputs and computation results (Step $\textcircled{5}$). The notation definitions are listed in Table~\ref{notation}.
The details are described in the following sections.

\begin{table}[!ht]
\centering \caption{Notation descriptions}
\label{notation}
\tabcolsep 0.4pt
\begin{threeparttable}[b] %
\begin{tabular*}{3.5in}{l||l}
\toprule
\hline
Notations & Descriptions \\ \hline
$[x]$-shared&Each data domain holds a tuple $(x^{(i)},\gamma(x)^{(i)})$\\ \hline
$\mathcal{D}$ &The data domain set with the size of $n$\\ \hline
$\mathcal{X}^l$&Activation map of $l$-th layer  \\ \hline
$N(0,1)$&The distribution of zero mean and unit standard deviation \\ \hline
$\widetilde{\mathcal{X}}^l$&Transferred activation map of $l$-th layer\\ \hline
$p$&Precision  \\ \hline
$L$ &Number of layers \\ \hline
$h_l$, $w_l$, $c_l$& the height, width and channels of $\mathcal{X}^l$\\ \hline
$\Theta$, $\Theta_i$ & Degree matrix and degree vector of $\mathcal{D}_i$\\ \hline
$\mathcal{V}$ &Vector of elements at a certain location in all maps\\ \hline
$\mathcal{N}et_i$ &CNN network of the $i$-th data domain \\ \hline
$W_i$&Model parameters $\{W_i^l\}_l^L$ \\ \hline
$\mathcal{L}$&Loss function \\ \hline
$\alpha$&Global MAC key\\ \hline
$\gamma(x)=\alpha x$&MAC value of $x$\\ \hline
$\otimes $& SPDZ multiplication computation over integers\\ \hline
\bottomrule
\end{tabular*}
\end{threeparttable} %
\end{table}

\subsection{Construction of Secure Cross Unit}
Here, we design the cross unit to train networks over data representations transferred between two data domains (\ie $\mathcal{D}_1$ and $\mathcal{D}_2$). A cross unit is employed to implement collaborative transfer learning over activation maps after a pooling layer.
Fig.~\ref{crossconnection} plots the specified process of cross unit between two CNN models $\mathcal{N}et_1$ and $\mathcal{N}et_2$.

\paratitle{Representation Extraction} (Step $\textcircled{1}$):
Assume that the architecture of a CNN model contains two pooling layers. Each data domain pre-trains individual CNN model with the same  architecture over its training data.
At each layer of the network, activation maps~\cite{yang2019towards,zhu2019learning} are proposed as data representations of training data. The overwhelming majority of modern CNN architectures achieve activation maps through a ReLU, which imposes a hard constraint on the intrinsic structure of the maps. The activation map at the $l$-th layer is denoted as $\mathcal{X}^l\leftarrow \mathbb{R}^{h_l\times w_l\times c_l}$.
Then, each data domain implements collaborative transfer learning. After a pooling layer, a cross unit $\Pi_{\text{CrossUnit}}^{\text{SPDZ}}$ is adopted to transfer activation maps between two data domains.

\paratitle{Quantization} (Step $\textcircled{2}$):
In a CNN network, activation maps are normalized with batch normalization~\cite{ioffe2015batch}, and the distribution of each activation map is $N(0,1)$. However, activation maps cannot be directly encoded and operated in SPDZ libraries and thus require pre-process.
We adopt an approximation method to convert floating-point numbers to fixed-point numbers with a precision $p$, where $p$ is the number of bits of approximation precision and the upper bound of the approximation $[1-2^{p},-1+2^{p}]$. For example, given a message $m_1$ and $m_2$, the encoded numbers are defined as $m'_1=Q(m_1,p)=\lfloor m_1 2^p\rceil$ and $m'_2=Q(m_2,p)= \lfloor m_2 2^p\rceil$.
Specifically, the result of multiplication operation in SPDZ can change the precision of $m'_1\times m'_2$ to $2^{2p}$ while the result of an addition operation $m'_1+ m'_2=(m_1+m_2)2^{p}$ in SPDZ is uninfluenced. This is because the SPDZ computation runs over encoded numbers, multiplication operations lead to the expand of precision as $m'_1\times m'_2= m_1 2^{p}\times  m_2 2^{p}
=m_1 m_2 2^{2p}$.
Therefore, it is necessary to keep a precision consistent with a truncation $T(m'_1 m'_2,p)=\lfloor\frac{m'_1 m'_2}{2^{p}}\rceil$ after each multiplication operation, where $T(x,p)=min(max(\lfloor\frac{x}{2^{p}}\rceil,-1+2^p),1-2^p).$

\paratitle{Secure and Verifiable Cross Unit} (Step $\textcircled{3}$):
After the $l$-th pooling layer, given two activation maps $\mathcal{X}_1^l,\mathcal{X}_2^l$ of $\mathcal{N}et_1$ and $\mathcal{N}et_2$, a cross unit yields transferred activation maps $\widetilde{\mathcal{X}}_1^l$ and $\widetilde{\mathcal{X}}_2^l$, the specific computation is shown as

\begin{equation}
\begin{aligned}\label{crossUnit}
\nonumber
\left[\begin{matrix}
\widetilde{x}_1\\ \widetilde{x}_2
\end{matrix}\right]
=
\left[\begin{matrix}
\Theta_{1}\cdot \mathcal{V} \\ \Theta_{2}\cdot \mathcal{V}
\end{matrix}
\right]
=
\left[\begin{matrix}
\theta_{11}, \theta_{12}\\ \theta_{21}, \theta_{22}
\end{matrix}
\right]
\left[\begin{matrix}
{x}_1\\ {x}_2
\end{matrix}
\right].
\end{aligned}
\end{equation}
We describe these computation parameters as follows.
\begin{itemize}
  \item Traversing activation maps $\mathcal{X}_1^l,\mathcal{X}_2^l\leftarrow \mathbb{R}^{h_l\times w_l\times c_l}$, then we obtain ${h_l\times w_l\times c_l}$ vectors $\mathcal{V}=({x}_1,{x}_2)^T$, where the elements ${x}_1\in{{\mathcal{X}}}_1^l$, ${x}_2\in{{\mathcal{X}}}_2^l$ are the location-specific elements.

  \item The degree matrix $\Theta\in[0,1]$ is a symmetric matrix which weighes the degree of shared representations and specified representations. $\Theta_1=(\theta_{11}, \theta_{12})$ is the degree vector of $\mathcal{D}_1$, while $\Theta_2=(\theta_{21}, \theta_{22})$ is the degree vector of $\mathcal{D}_2$.
      Specifically, the higher values of $\theta_{11}$, $\theta_{22}$, the more specified representations of data domains $\mathcal{D}_1$ and $\mathcal{D}_2$. Similarly, the higher values of $\theta_{12}$, $\theta_{21}$ ($\theta_{12}=\theta_{21}$), the more representations are shared between $\mathcal{D}_1$ and $\mathcal{D}_2$.
  \item  The computation results $\widetilde{x}_1\in\widetilde{\mathcal{X}}_1^l$ and $\widetilde{x}_2\in\widetilde{\mathcal{X}}_2^l$ represent the corresponding position in the transferred activation maps.
\end{itemize}

\begin{figure}
  \centering
  \includegraphics[width=3.5in]{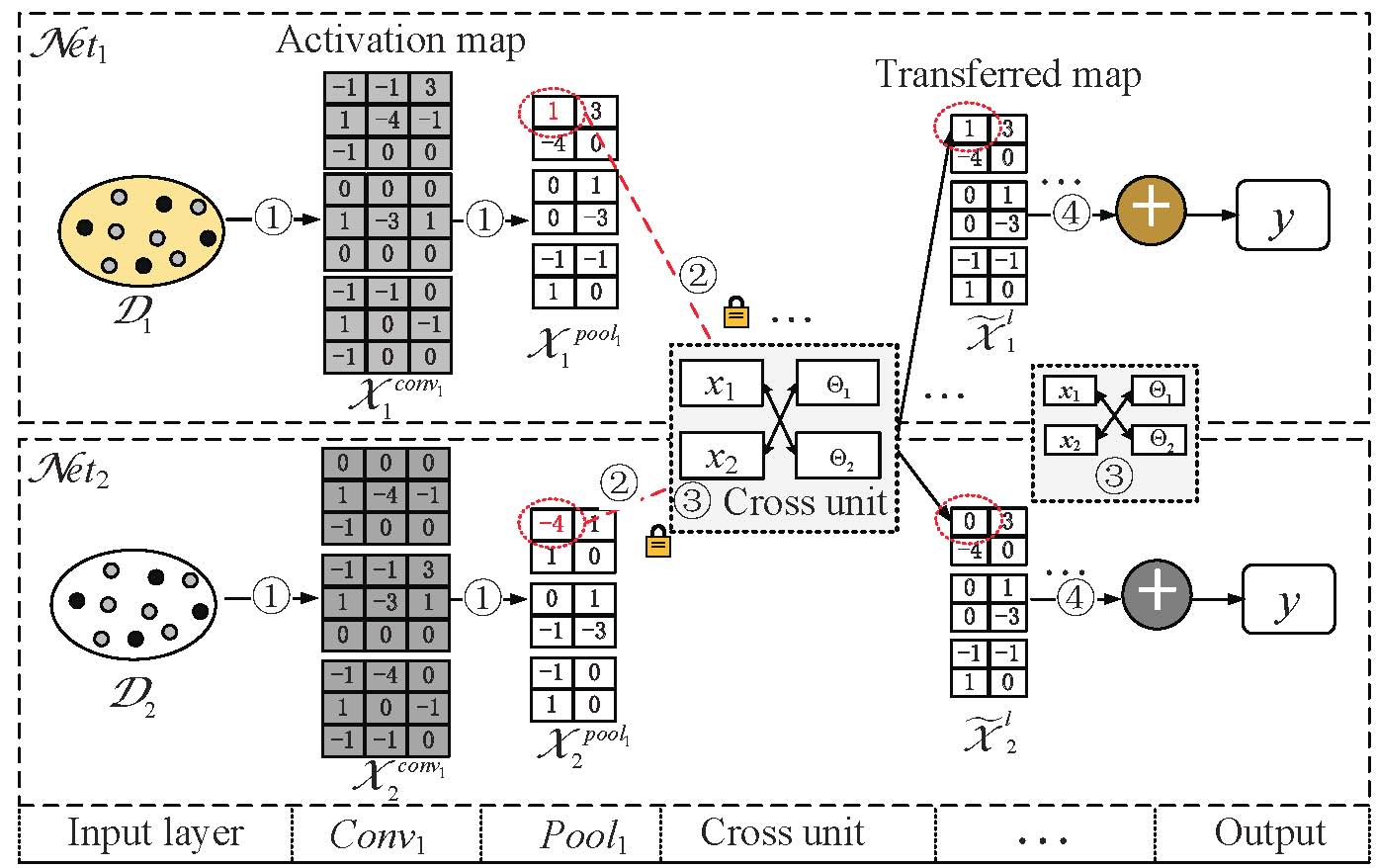}\\
  \caption{An example of cross unit. ``$Conv_1$" means the 1-st convolutional layer, ``$Pool_1$" represents the 1-st pooling layer, $x_1$ and $x_2$ is the location-directed element in activation maps $\mathcal{X}_1^l$ and $\mathcal{X}_2^{l}$, respectively.}\label{crossconnection}
\end{figure}

In the following, we improve SPDZ protocol to securely compute vector multiplication against dishonest majority. The details will be described together with MAC check mechanism.

\paratitle{\rm \underline{\textsf{MACCheck}}}:
Assume that some values of $m$ have been $[\cdot]$-shared and partially opened, then all data domains $\mathcal{D}$ receive additive shares of $m$. However, an adversary can attempt to corrupt the inputs by replacing different and inconsistent attacker-chosen inputs. It is unconfirmed that these shares $[m]$ and the opened value $m$ are correct. Before returning a opened value, it is necessary to verify these shares and opened value with $\gamma(m)=\alpha m$ based on Eq.~\ref{maccheck}, where $\alpha$ is the global MAC key. A MAC-check mechanism is defined in Algorithm 1, which guarantees the correctness of an opened value by verifying the relation among secret shares and MAC shares.

\begin{algorithm}
\scriptsize
\footnotesize
\small
\label{crossunit}
\caption{MACCheck}
\KwIn{Each $\mathcal{D}_i$ has a local MAC key $\alpha^{(i)}$ and $\gamma(m_j)^{(i)}$ and a public set $\{m^{(1)},m^{(2)},...,m^{(n)}\}$}
\KwOut{Success or failure.}
$\mathcal{D}_i$ agrees on a random vector $r\leftarrow \mathbb{Z}_{2^\kappa}^{n}$, and all domains obtain the same vector;\\
$\mathcal{D}_i$ computes the public value $c\leftarrow \sum_{j=1}^n r_j\cdot m^{(j)}$, $\gamma(c)^{(i)}\leftarrow\sum_{i=1}^\mathcal{D} r_j\cdot \gamma(m^{(j)})^{(i)}$ and $\sigma^{(i)}\leftarrow \gamma(c)^{(i)}-\alpha^{(i)} c$;\\
$\mathcal{D}_i$ broadcasts $\sigma^{(i)}$, and all domains receive a set $\{\sigma^{(1)},\sigma^{(2)},...,\sigma^{(n)}\}$;\\
\If {$\Sigma_{i=1}^n \sigma^{(n)} \neq 0$}
{\Return $\bot$ and abort.
}
\end{algorithm}

As a motivating example, given the shares $[m]^{(i)}$, each domain $\mathcal{D}_i$ receives the $i$-th share of $m$ denoted as $[m]^{(i)}$, and agrees on a random vector $r\leftarrow \mathbb{Z}_{2^\kappa}^{n}$. Then, each domain $\mathcal{D}_i$ computes $c\leftarrow \sum_{j=1}^{|\mathcal{D}|} r_j\cdot m^{(j)}$ by adding a random mask to $m^{(j)}$ and $\gamma(c)^{(i)}\leftarrow\sum_{i=1}^{|\mathcal{D}|} r_j\cdot \gamma(m^{(j)})_i$. Finally, all domains implement MAC check with both $c$ and $\gamma(c)_i$. If the MAC check fails, then $\bot$ is outputed and the computation process aborts.
Otherwise, all domains open $m\leftarrow\sum^{|\mathcal{D}|}m^{(i)}$ over $m$ and $\gamma(m^{(i)})$.

\paratitle{\rm \underline{\textsf{VectorMul}}}:
To implement secure multiplication over secret shares of two vectors (\eg $\Theta$ and $\mathcal{V}$),
each data domain $\mathcal{D}_i$ executes SPDZ multiplication operations $\otimes $ over shares $x^{(i)}\in \mathcal{V}^{(i)}$ and $\theta^{(i)}\in \Theta^{(i)}$. Moreover, SPDZ creates and shares tuples $([a],[b],[c])$ in the offline phase, where $c=ab$, and $a,b,c\in\mathbb{Z}_{2^\kappa}$. We illustrate the specific processes as follows.

\begin{itemize}
\item A triple $(a,b,c)$ is involved, where $c=a\cdot b$. Each domain $\mathcal{D}_i$ first operates the masked shares of $\mu^{(i)}=x^{(i)}-a^{(i)}$ and $\nu^{(i)}=\theta^{(i)}-b^{(i)}$ over received shares $a^{(i)}$, $b^{(i)}$, $x^{(i)}$ and $\theta^{(i)}$. Then, the masked shares are broadcasted to all domains. Thus, each domain can open values of $\mu$ and $\nu$. Subsequently, each party $\mathcal{D}_i$ computes $z^{(i)}=c^{(i)}+\mu b^{(i)}+\nu a^{(i)}.$

 \item Besides, to output a share $[z]$, it is required to implement \textsf{MACCheck} to verify both the input and output. If the \textsf{MACCheck} fails, then all domains receive $\bot$ and abort. Otherwise, the final result $z$ of domains is opened with verified shares as
\begin{equation}
\begin{aligned}
\nonumber
z&=\mu\nu+\Sigma_{i=1}^{|\mathcal{D}|}{z_i}=\mu\nu+ c+\mu b+\nu a\\
   &=c+(x-a) b+(\theta-b)a+(x-a)(\theta-b)=x\theta.
\end{aligned}
\end{equation}

\end{itemize}

As a toy example of SPDZ-based vector multiplication, given two vectors $\mathcal{V}=({x}_1, {x}_2)^T$ and $\Theta_1=(\theta_{11}, \theta_{12})$,\footnote{The symbol $``T"$ is the vector transpose.} the result $z$ is produced as
\begin{equation}
\begin{aligned}\label{crossUnit2}
\nonumber
z=\Theta_1\times \mathcal{V}
=
\left[\begin{matrix}
\theta_{11}, \theta_{12}
\end{matrix}
\right]
\left[\begin{matrix}
{x}_1\\ {x}_2
\end{matrix}
\right]
=
\theta_{11}{\otimes }{x}_1+\theta_{12}\otimes {x}_2.
\end{aligned}
\end{equation}

\paratitle{\rm \underline{\textsf{Cross unit {\rm $\Pi_{\text{CrossUnit}}^{\text{SPDZ}}$}}}}:
As both activation maps and degree matrix $\Theta$ include sensitive information of a data domain, SPDZ protocols are adopt to execute secret sharing with a bit of random masks injected into each computation to prevent data leakage between both domains.
All domains quantize elements $x\in \mathcal{V} $ and $\theta\in \Theta_i$ in vectors with $Q(x,p)$ and $Q(\theta,p)$ before implementing secret sharing, then broadcast secret shares $x^{(i)}$, $\theta^{(i)}$ to other data domains, \eg a value $m$ is $[\cdot]$-shared as $[m]=\{m^{(1)},m^{(2)},...,m^{(n)},\alpha^{(1)},\alpha^{(2)},...,\alpha^{(n)},\gamma(m)^{(1)},\gamma(m)^{(2)},...,$ $\gamma(m)^{(n)}\}$, where $\gamma(m)=\alpha m =\sum_{i=1}^n \gamma(m)^{(i)}$.

In a cross unit, it involves vector multiplication as \textsf{VectorMul}$(\Theta_i,{\mathcal{V}})$ to guarantee the correctness of the whole process against covert adversaries, which is demonstrated as
\begin{equation}
\begin{aligned}\label{crossUnit2}
\nonumber
\Theta \times\mathcal{V}
&=
\left[\begin{matrix}
\theta_{11}, \theta_{12}\\ \theta_{21}, \theta_{22}
\end{matrix}
\right]
\left[\begin{matrix}
{x}_1\\ {x}_2
\end{matrix}
\right]
=
\left[\begin{matrix}
\textsf{VectorMul}(\theta_1,{\mathcal{V}})\\ \textsf{VectorMul}(\theta_2,{\mathcal{V}})
\end{matrix}\right]
\\&=
\left[\begin{matrix}
\theta_{11}{x}_1+\theta_{12}{x}_2\\ \theta_{21}{x}_1+\theta_{22}{x}_2
\end{matrix}\right]
=
\left[\begin{matrix}
\widetilde{x}_1\\ \widetilde{x}_2
\end{matrix}\right].
\end{aligned}
\end{equation}
The computation results $\widetilde{x}_i\leftarrow T(\widetilde{x}_i,p)$ returned with a truncation represent the same location in the transferred activation maps $\widetilde{\mathcal{X}}_i^l$.

\paratitle{Tune model}(Step $\textcircled{4}$): To build a transfer model, the global object of collaborative transfer learning is to tune local models $\{\mathcal{N}et_i\}_i^n$ as follows.
\begin{equation}
\begin{aligned}
\nonumber
\mathop{argmin}_{W_1,W_2}\; \mathcal{L}(W_1)+\mathcal{L}(W_2)\\
s.t.\; W_1\in \mathcal{N}et_1, W_2\in \mathcal{N}et_2.
\end{aligned}
\end{equation}

\paratitle{\rm \underline{\textsf{Forward pass}}}: Upon obtaining the transferred $\widetilde{\mathcal{X}}^l_i$, each domain $\mathcal{D}_i$ implements $\widetilde{\mathcal{X}}^{l+1}_i=f(W_i^l \widetilde{\mathcal{X}}^{l}_i)$ and outputs prediction $\mathcal{X}^L$ after $L$ layers.

\paratitle{\rm \underline{\textsf{Backward pass}}}: For minimizing loss function $\mathcal{L}$ measuring the difference between predictions $\mathcal{X}^L$ and ground-truth labels $y$, the objective function can be optimized by the back-propagation method.
During the tune process of trained networks, the global object is divided into local optimization over a single domain $\mathcal{D}_i$. The gradients of data representations are back-propagated, the derivatives of loss function $\mathcal{L}$ in a cross unit are defined as
\begin{equation}
\begin{aligned}\label{crossUnit2}
\left[\begin{matrix}
\nonumber
\frac{\partial \mathcal{L}}{\partial \widetilde{\mathcal{X}}^l_1}\\ \frac{\partial \mathcal{L}}{\partial \widetilde{\mathcal{X}}^l_2}
\end{matrix}\right]
=
\left[\begin{matrix}
\theta_{11}, \theta_{21}\\
\theta_{12}, \theta_{22}
\end{matrix}
\right]
\left[\begin{matrix}
\frac{\partial \mathcal{L}}{\partial \mathcal{X}^l_1}\\ \frac{\partial \mathcal{L}}{\partial \mathcal{X}^l_2}
\end{matrix}
\right], \theta_{12}=\theta_{21}.
\end{aligned}
\end{equation}

\textbf{Remarks}. Compared with previous transfer learning schemes with one-way transfer, we implement collaborative transfer between two data domains with $\Pi_{\text{CrossUnit}}^{\text{SPDZ}}$, which can protect transferred knowledge between two domains, and implement verification to prevent malicious behaviours of certain data domain in the presence of covert adversaries. During the whole process, each data domain is not only a source domain to transfer individual extracted representations, but also a target domain to construct individual CNN model over transferred representations of others.
However, the cross unit only supports the two-domain setting. Thus, it is significant to design a secure and verifiable collaborative transfer scheme under multi-domain setting.

\begin{figure}
  \centering
  \includegraphics[width=3in]{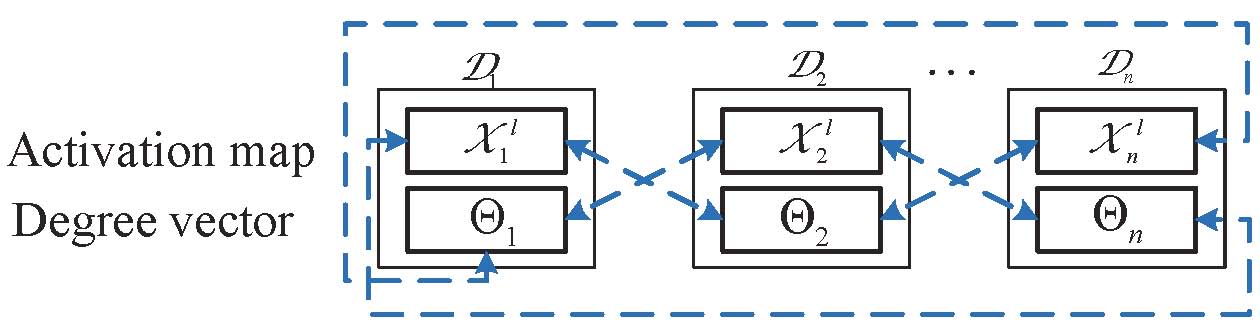}\\
  \caption{The structure of weave unit. }\label{weaveUnit}
\end{figure}

\subsection{Construction of Secure Weave Unit}
To implement the multi-domain transfer learning, we present a weave unit to transfer representations among $n$ domains, where $n-1$ out of $n$ domains can collude with each other.
The key idea is to combine as many data representations as possible to transfer over $n$ data domains.

As depicted in Fig.~\ref{weaveUnit}, the $n\times n$ degree matrix of $\Theta=\{\Theta_i\}_i^n$ (a degree vector $\Theta_i=({\theta_{i1}}, ..., \theta_{in})$ ) is denoted as\footnote{The values of $\theta_{ij}$ and $\theta_{ji}$ are the correlational relationships between $\mathcal{D}_i$ and $\mathcal{D}_j$, \ie $\theta_{ij}=\theta_{ji}$. Thus, $\Theta$ is a symmetric matrix.}

\begin{equation}
\begin{aligned}\label{weaveUnit}
\left[\begin{matrix}
\nonumber
{\theta_{11}}, \theta_{12}, ..., \theta_{1n} \\ \theta_{21}, {\theta_{22}}, ..., \theta_{2n}\\ ...\\ \theta_{n1}, \theta_{n2}, ..., {\theta_{nn}}
\end{matrix}
\right].
\end{aligned}
\end{equation}
The elements ${\theta_{11}}$, ${\theta_{22}}$,..., ${\theta_{nn}}$ in the diagonal line of the matrix are denoted as $\theta_{\text{s}}$ that describes the degree of specified representations over individual data domains. The other elements are denoted as $\theta_{\text{t}}$ that describes the degree of transferred representations over other data domains. To implement flexible transfer learning, $\theta_{t_{ij}}$ is defined as the degree of transferred representations between $\mathcal{D}_i$ and $\mathcal{D}_j$, a higher value of $\theta_{t_{ij}}$ means that more data representations of $\mathcal{D}_j$ are transferred to $\mathcal{D}_i$, where $\theta_{t_{ij}}=\theta_{t_{ji}}$.

\begin{figure*}[!ht]
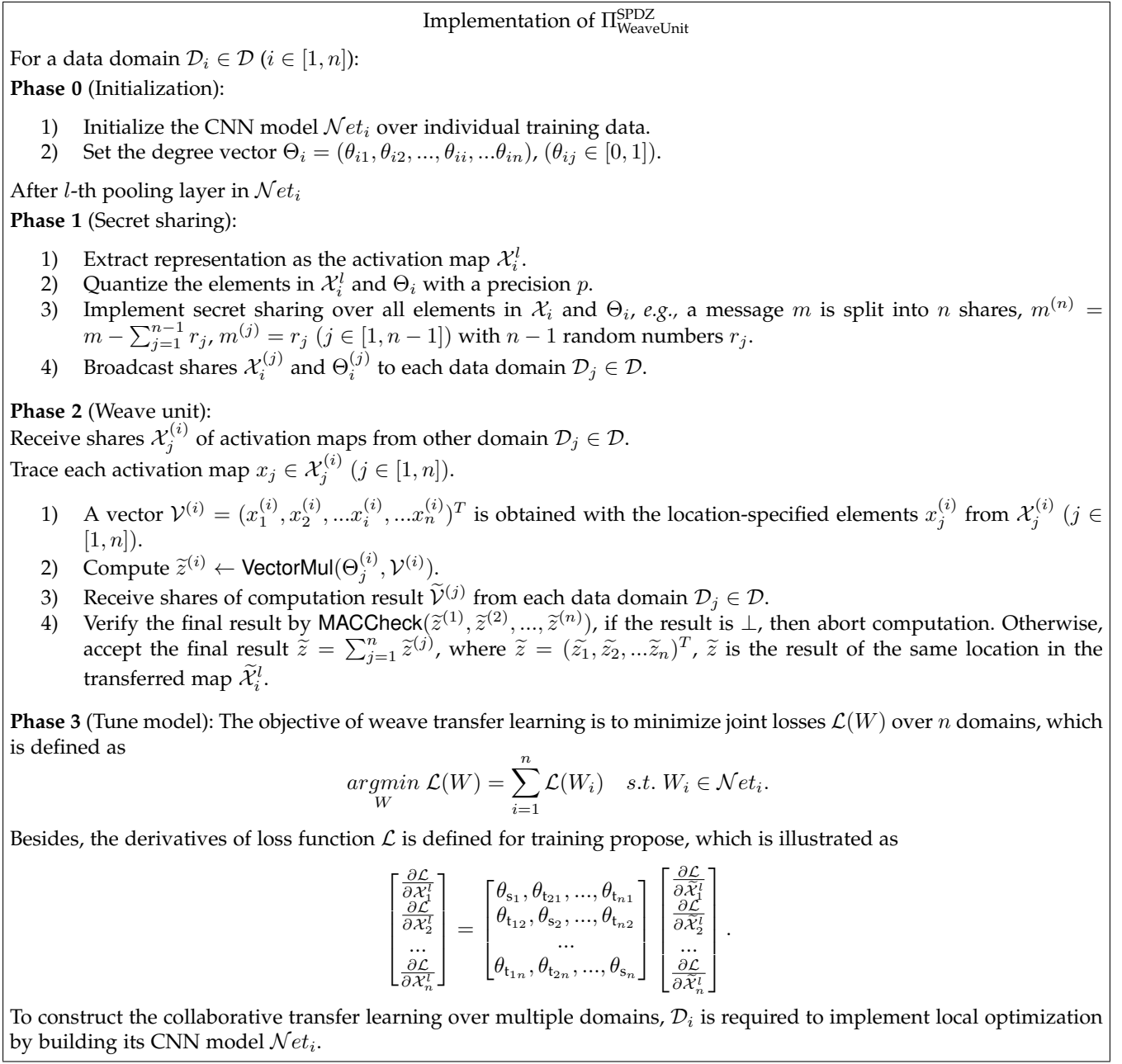

\centering
\begin{boxedminipage}{0.98\textwidth}
\begin{center}
{Implementation of $\Pi_{\text{WeaveUnit}}^{\text{SPDZ}}$}\\
\end{center}
For a data domain $\mathcal{D}_i \in \mathcal{D}$ ($i\in[1,n]$):

{\paratitle{Phase 0} (Initialization)}:
\begin{enumerate}
  \item Initialize the CNN model $\mathcal{N}et_i$ over individual training data.
  \item Set the degree vector $\Theta_i=(\theta_{i1},\theta_{i2},...,\theta_{ii},...\theta_{in})$, $(\theta_{ij}\in[0,1])$.
\end{enumerate}
After $l$-th pooling layer in $\mathcal{N}et_i$

\paratitle{Phase 1} (Secret sharing):
\begin{enumerate}
 \item Extract representation as the activation map $\mathcal{X}_i^l$.
 \item Quantize the elements in $\mathcal{X}_i^l$ and $\Theta_i$ with a precision $p$.
 \item Implement secret sharing over all elements in $\mathcal{X}_i$ and $\Theta_i$, \eg a message $m$ is split into $n$ shares, $m^{(n)}=m-\sum_{j=1}^{n-1}r_j$, $m^{(j)}=r_j$ $(j\in [1,n-1])$ with $n-1$ random numbers $r_j$.
 \item Broadcast shares $\mathcal{X}_i^{(j)}$ and $\Theta_i^{(j)}$ to each data domain $\mathcal{D}_j \in \mathcal{D}$.
 \end{enumerate}
 \paratitle{Phase 2} (Weave unit):

 Receive shares $\mathcal{X}_j^{(i)}$ of activation maps from other domain $\mathcal{D}_j \in \mathcal{D}$.

 Trace each activation map $x_j\in\mathcal{X}_j^{(i)}$ $(j\in[1,n])$.

 \begin{enumerate}
 \item 
 A vector $\mathcal{V}^{(i)}=(x_1^{(i)},x_2^{(i)},...x_i^{(i)},...x_n^{(i)})^T$ is obtained with the location-specified elements $x_j^{(i)}$ from $\mathcal{X}_j^{(i)}$ $(j\in[1,n])$.
 \item Compute $\widetilde{z}^{(i)}\leftarrow \textsf{VectorMul}(\Theta_j^{(i)},\mathcal{V}^{(i)})$.
 \item Receive shares of computation result $\widetilde{\mathcal{V}}^{(j)}$ from each data domain $\mathcal{D}_j \in \mathcal{D}$.
 \item Verify the final result by $\textsf{MACCheck}(\widetilde{z}^{(1)},\widetilde{z}^{(2)},...,\widetilde{z}^{(n)})$, if the result is $\bot$, then abort computation. Otherwise, accept the final result $\widetilde{z}=\sum_{j=1}^n \widetilde{z}^{(j)}$, where $\widetilde{z}=(\widetilde{z_1},\widetilde{z_2},...\widetilde{z}_n)^T$, $\widetilde{z}$ is the result of the same location in the transferred map $\widetilde{\mathcal{X}}_i^l$.
 \end{enumerate}
 \paratitle{Phase 3} (Tune model):
 The objective of weave transfer learning is to minimize joint losses $\mathcal{L}(W)$ over $n$ domains, which is defined as
\begin{equation}
\begin{aligned}
\nonumber
\mathop{argmin}_{W}\; &\mathcal{L}(W)=\sum_{i=1}^n \mathcal{L}(W_i)
&s.t.\; W_i\in \mathcal{N}et_i.
\end{aligned}
\end{equation}
Besides, the derivatives of loss function $\mathcal{L}$ is defined for training propose, which is illustrated as
\begin{equation}
\begin{aligned}\label{crossUnit2}
\nonumber
\left[\begin{matrix}
\frac{\partial \mathcal{L}}{\partial \mathcal{X}_1^l}\\ \frac{\partial \mathcal{L}}{\partial \mathcal{X}_2^l}\\...\\ \frac{\partial \mathcal{L}}{\partial \mathcal{X}_n^l}
\end{matrix}\right]
=
\left[\begin{matrix}
\theta_{\text{s}_1}, \theta_{\text{t}_{21}},...,\theta_{\text{t}_{n1}} \\
\theta_{\text{t}_{12}}, \theta_{\text{s}_2}, ..., \theta_{\text{t}_{n2}} \\
...\\
\theta_{\text{t}_{1n}}, \theta_{\text{t}_{2n}}, ..., \theta_{\text{s}_n}
\end{matrix}
\right]
\left[\begin{matrix}
\frac{\partial \mathcal{L}}{\partial \widetilde{\mathcal{X}}_1^l}\\ \frac{\partial \mathcal{L}}{\partial \widetilde{\mathcal{X}}_2^l}\\...\\ \frac{\partial \mathcal{L}}{\partial \widetilde{\mathcal{X}}_n^l}
\end{matrix}\right].
\end{aligned}
\end{equation}

To construct the collaborative transfer learning over multiple domains, $\mathcal{D}_i$ is required to implement local optimization by building its CNN model $\mathcal{N}et_i$.

\end{boxedminipage}
\caption{Detailed descriptions of $\Pi_{\text{WeaveUnit}}^{\text{SPDZ}}$.}
\label{schemedefinition}
\end{figure*}

Specially, a weave unit is defined as
\begin{equation}
\begin{aligned}\label{crossUnit}
\nonumber
\left[\begin{matrix}
\widetilde{x}_1\\ \widetilde{x}_2\\...\\ \widetilde{x}_n
\end{matrix}\right]
=
{\left[\begin{matrix}
{\theta_{\text{s}_1}}, \theta_{\text{t}_{12}}, ..., \theta_{\text{t}_{1n}} \\ \theta_{\text{t}_{21}}, {\theta_{\text{s}_2}}, ..., \theta_{\text{t}_{2n}}\\ ...\\ \theta_{\text{t}_{n1}}, \theta_{\text{t}_{n2}}, ..., {\theta_{\text{s}_n}}
\end{matrix}
\right]}
\cdot
\left[\begin{matrix}
{x}_1\\ {x}_2\\...\\ {x}_n
\end{matrix}
\right],
\end{aligned}
\end{equation}
where $\mathcal{V}=({x}_1,{x}_2,...,{x}_n)$, ${x}_i\in \mathcal{X}_i^l$ are the elements of the same location in activation maps, $i\in [1,n]$.
$\widetilde{x}_1$ is the result of the corresponding position in the transferred activation map $\widetilde{\mathcal{X}_1^l}$, which is computed as
\begin{equation}
\begin{aligned}
\nonumber
\widetilde{x}_1=\Theta_1\cdot \mathcal{V}=\theta_{\text{s}_1} {x}_1+\theta_{\text{t}_{12}}{x}_2+\theta_{\text{t}_{13}}{x}_3+...+\theta_{\text{t}_{1n}}x_n.
\end{aligned}
\end{equation}
A specified weave transferred result $\widetilde{x}_1$ is determined by $\theta_{\text{s}}$ and $\theta_{\text{t}}$. With a higher value of $\theta_{\text{s}}$, the trained $\mathcal{N}et$ focuses on more data representations from individual images. With a higher value of $\theta_{\text{t}}$, $\mathcal{N}et$ is tuned over more transferred data representations from other domains.

\paratitle{Secure and Verifiable Weave Unit} $\Pi_{\text{WeaveUnit}}^{\text{SPDZ}}$:
To guarantee the security in the presence of dishonest majority, the algorithm $\Pi_{\text{WeaveUnit}}^{\text{SPDZ}}$ is presented. The whole process involves the predefined \textsf{MACCheck} and SPDZ multiplication, which is divided into several phases in Fig.~\ref{schemedefinition}.

\section{Security Analysis} \label{section:Security Analysis}
In this section, we first give the security definition, and then analyze our proposed scheme to evaluate whether it satisfies the privacy requirements in Section~\ref{sec:threatmodel} under the following security definitions.
\subsection{Security Definition}
We follow the security definition formalized in~\cite{hazay2010note,furukawa2017high}, the security of a protocol $\pi$ is defined as the indistinguishability between the real-model executed by an adversary $\mathcal{A}$ and an ideal functionality with a simulator $\mathcal{S}$, which is formalized as $\textsf{REAL}_{\pi,\mathcal{A}}\stackrel{c}{\equiv}\textsf{IDEAL}_{\pi,\mathcal{S}}$.

\paratitle{Real-world model {\rm $\textsf{REAL}$}}: The $n$-party protocol $\pi$ is executed over data domains $\mathcal{D}$. Each data domain $\mathcal{D}_i$ provides the public inputs $\textsf{Input}_i^p=(\mathcal{X}_i^p$, $\Theta_i^p)$ and secret inputs $\textsf{Input}_i^s=(\mathcal{X}_i^s$, $\Theta_i^s)$, then the public output $\textsf{Output}_i^p$ and secret output $\textsf{Output}_i^s$ are produced with random masks $r_i\in \mathbb{Z}_{2^\kappa}$ ($i\in[1,n]$). Besides, there exist some subsets of multiple independent covert adversaries $\{\mathcal{A}_1, \mathcal{A}_2,...,\mathcal{A}_n\}$, where $\mathcal{A}_i$ can corrupt a data domain $\mathcal{D}_i$, and the number of adversaries can be a majority of domains.

Here, let $\mathcal{D}^{{c}}\subset\mathcal{D}$ be the corrupted data domains and $\mathcal{D}^{{h}}\subset\mathcal{D}$ be the honest data domains, where $\mathcal{D}=\mathcal{D}^{{c}}\cup\mathcal{D}^{{h}}$. In $\textsf{REAL}$, with the given inputs, the output of the protocol $\pi$ after a real-model execution is defined as follows:
\begin{equation}
\begin{aligned}
\nonumber
\textsf{REAL}_{\pi,\mathcal{A}}=\{\textsf{REAL}_{\pi,\mathcal{A}_i}(\mathcal{D}^c,\kappa,\mathcal{{X}}_i,\Theta_i,r_i)\}_{i\in[1,n]},
\end{aligned}
\end{equation}
where $\kappa$ is security parameter, $\mathcal{{X}}=\{\mathcal{{X}}_1,...,\mathcal{{X}}_n\}$ and $\Theta=\{\Theta_1,...,\Theta_n\}$ are the set of activation maps and the set of degree vectors from all data domains, respectively.

\paratitle{Ideal-world model {\rm$\textsf{IDEAL}$}}: The function $f$ is executed as a probabilistic $n$-party function in Probabilistic Polynomial Time (PPT), which is defined as $f(\kappa,\textsf{Input}_1^s,\textsf{Input}_1^p,...,\textsf{Input}_n^s,\textsf{Input}_n^p,r)$, and $r$ is a set of random masks. In $\textsf{IDEAL}$, all domains send individual inputs to a trusted third party $\mathcal{T}$ that executes $f$ over these inputs and returns $(\textsf{Output}_i^p,\textsf{Output}_i^s)$ to $\mathcal{D}_i$. After an ideal-model execution with the presence of PPT simulators $\mathcal{S}_i$ ($i\in[1,n]$), the view is defined as
\begin{equation}
\begin{aligned}
\nonumber
\textsf{IDEAL}_{f,\mathcal{S}}=\{\textsf{IDEAL}_{f,\mathcal{S}_i}(\mathcal{D}^c,\kappa,\mathcal{{X}}_i,\Theta_i,r_i)\}_{i\in[1,n]}.
\end{aligned}
\end{equation}

\paratitle{Hybrid model {\rm$\textsf{HYB}$}}: In the $(g_1,g_2,...,g_l)$-hybrid model, the protocol $\pi$ is executed in the real-world model, except that data domains access to the trusted third party $\mathcal{T}$ for evaluating $n$-party functions $g_1,g_2,...,g_l$, while these ideal evaluations are executed in the ideal-world model.
\begin{equation}
\begin{aligned}
\nonumber
\textsf{HYB}_{\pi,\mathcal{A}}^{g_1,g_2,...,g_l}=
\{\textsf{HYB}_{\pi,\mathcal{A}_i}^{g_1,g_2,...,g_l}(\mathcal{D}^c,\kappa,\mathcal{{X}}_i,\Theta_i,r_i)\}_{i\in[1,n]}.
\end{aligned}
\end{equation}
Here, the security of a protocol $\pi$ is required with the real-world execution or a $(g_1,g_2,...,g_l)$-hybrid execution of an ideal function $f$ without leaking any sensitive information to $\mathcal{{A}}$.

Based on above models, the formal security definition is provided as follows.

\begin{definition}\rm
The $n$-party protocol $\pi$ can securely implement $n$-party function $f$ in a $(g_1,g_2,...,g_l)$-hybrid model with an adversary $\mathcal{{A}}$ that can corrupt a subset of $\mathcal{{D}}^{h}$, there exists an ideal-model simulator $\mathcal{S}$ such that
\begin{equation}
\begin{aligned}
\nonumber
\textsf{IDEAL}_{f,\mathcal{S}}(\mathcal{D}^c,\kappa,\mathcal{{X}},\Theta,r)\stackrel{c}{\equiv}
\textsf{REAL}_{f,\mathcal{A}}(\mathcal{D}^c,\kappa,\mathcal{{X}},\Theta,r).
\end{aligned}
\end{equation}
\end{definition}

\subsection{Security Proofs}
Based on above security definition, we theoretically prove that the security of the proposed system is computationally indistinguishable in following theorems.

\begin{theorem}\rm
Let the secure cross unit be a protocol that securely computes a functionality $\Pi_{\text{CrossUnit}}^{\text{SPDZ}}$ between two data domains ($\mathcal{{D}}_1$ and $\mathcal{{D}}_2$) in the presence of a covert adversary.
\end{theorem}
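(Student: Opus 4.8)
The aim is to exhibit a PPT simulator $\mathcal{S}$ witnessing $\textsf{REAL}_{\Pi_{\text{CrossUnit}}^{\text{SPDZ}},\mathcal{A}}\stackrel{c}{\equiv}\textsf{IDEAL}_{f,\mathcal{S}}$ in the hybrid model of Section~\ref{section:Security Analysis}. First I would fix the hybrid setup: work in the $(\mathcal{F}_{\text{Prep}},\mathcal{F}_{\text{Coin}})$-hybrid model, where $\mathcal{F}_{\text{Prep}}$ is the SPDZ offline functionality that delivers to each $\mathcal{D}_i$ its MAC-key share $\alpha^{(i)}$ and authenticated Beaver triples $([a],[b],[c])$ with $c=ab$ over $\mathbb{Z}_{2^\kappa}$, and $\mathcal{F}_{\text{Coin}}$ outputs the common random vector $r$ consumed by \textsf{MACCheck}. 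Since here $n=2$ and a covert adversary corrupts at most $n-1=1$ domain, I assume without loss of generality that $\mathcal{A}$ controls $\mathcal{D}_1$; the case where $\mathcal{D}_2$ is corrupted is symmetric. The simulator will run $\mathcal{A}$ internally, emulate $\mathcal{F}_{\text{Prep}}$, $\mathcal{F}_{\text{Coin}}$ and the honest domain $\mathcal{D}_2$, extract the effective input $(\mathcal{X}_1,\Theta_1)$ that $\mathcal{A}$ commits to through its shares, forward it to the trusted party $\mathcal{T}$, and return a simulated transcript built from $\mathcal{T}$'s reply.

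Next I would simulate $\Pi_{\text{CrossUnit}}^{\text{SPDZ}}$ step by step. The quantization $Q(\cdot,p)$ and truncation $T(\cdot,p)$ are deterministic local maps and require no simulation. In the secret-sharing step, $\mathcal{S}$ hands $\mathcal{A}$ uniformly random ring elements in place of the honest shares $\mathcal{X}_2^{(1)},\Theta_2^{(1)}$; this is perfectly indistinguishable, since an additive $n$-out-of-$n$ sharing reveals nothing to any $n-1$ parties. From the shares $\mathcal{X}_1^{(2)},\Theta_1^{(2)}$ that $\mathcal{A}$ returns, together with the triples it received from the emulated $\mathcal{F}_{\text{Prep}}$, $\mathcal{S}$ reconstructs $\mathcal{A}$'s committed inputs. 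Inside \textsf{VectorMul}, the opened masks $\mu=x-a$ and $\nu=\theta-b$ are simulated as fresh uniform elements, which is faithful because $a,b$ are uniform and independent of everything in $\mathcal{A}$'s view; the partial openings of the $z^{(i)}$ are then simulated consistently with the extracted inputs and with the transferred values $\mathcal{T}$ hands back. Finally, $\mathcal{S}$ emulates the \textsf{MACCheck} broadcast by computing the honest $\sigma^{(i)}$ from the values it controls and sampling $\mathcal{A}$'s view accordingly.

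The delicate part --- and the place where covert security is genuinely obtained --- is the verification and abort analysis. If $\mathcal{A}$ deviates, for instance by broadcasting shares or openings inconsistent with its committed input, or by injecting an additive error into some partially opened value, then the checked linear combination $c=\sum_j r_j m^{(j)}$ carries a nonzero error and $\sum_i\sigma^{(i)}$ equals $\alpha$ times the induced MAC discrepancy. Because $\alpha$ is uniform in $\mathbb{Z}_{2^\kappa}$ and information-theoretically hidden from $\mathcal{A}$, I would argue that the check passes in the presence of a nonzero error only with probability negligible in $\kappa$; this step needs care because of the zero divisors of $\mathbb{Z}_{2^\kappa}$, and I would invoke the MAC-check soundness lemma of~\cite{damgaard2013practical} (or its ring analogue with statistical slack). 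Whenever the real \textsf{MACCheck} would output $\bot$, $\mathcal{S}$ instructs $\mathcal{T}$ to abort; otherwise it relays $\mathcal{T}$'s answer. Hence in both worlds the honest domain outputs either the correct transferred map $\widetilde{\mathcal{X}}_2^l$ or $\bot$, with the same distribution up to the negligible forgery term. I expect this ring-MAC soundness bound --- rather than the routine sharing and Beaver-triple simulation --- to be the main obstacle, since it must simultaneously rule out $\mathcal{A}$ biasing the output and passing the check.

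Finally, assembling a standard hybrid argument over the four replacements --- honest shares $\to$ uniform, opened masks $\to$ uniform, \textsf{MACCheck} transcript $\to$ its simulation, and passing-but-incorrect execution $\to$ abort --- yields $\textsf{REAL}\stackrel{c}{\equiv}\textsf{IDEAL}$, and the symmetric treatment of a corrupted $\mathcal{D}_2$ completes the argument.
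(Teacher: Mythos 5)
Your proposal is sound and would establish the claimed indistinguishability, but it follows a genuinely different route from the paper. You give a single, unified active-security simulation in the $(\mathcal{F}_{\text{Prep}},\mathcal{F}_{\text{Coin}})$-hybrid model: the simulator emulates the preprocessing and coin-tossing functionalities, extracts the corrupted domain's effective input $(\mathcal{X}_1,\Theta_1)$ from its shares and triples, replaces honest shares and Beaver-triple openings $\mu,\nu$ by uniform ring elements, and reduces covert security to an explicit soundness bound for \textsf{MACCheck} over $\mathbb{Z}_{2^\kappa}$ (correctly flagging the zero-divisor issue, which forces the ring analogue of the field-based MAC lemma of~\cite{damgaard2013practical}), so that any additive deviation is caught except with probability negligible in $\kappa$ and the simulator can translate a failed check into an ideal-world abort. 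The paper instead mirrors its definition of a covert adversary as lying between the two classical models: it argues the semi-honest case separately for the corruption patterns $(\mathcal{D}_1^{\text{c}},\mathcal{D}_2^{\text{h}})$ and $(\mathcal{D}_1^{\text{c}},\mathcal{D}_2^{\text{c}})$ by appealing to the uniformity of additive masks, and then treats the malicious case in an $(f_{\text{SPDZ}})$-hybrid model with a five-step simulator sketch (extract maps, simulate sharing, simulate \textsf{VectorMul}, simulate \textsf{MACCheck} with the invariant $\alpha\sum m^{(i)}-\sum\gamma(m)^{(i)}$, tune the model), asserting rather than quantifying that the MAC check blocks tampering. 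Your approach buys a cleaner extraction step, an explicit forgery/abort probability, and closer alignment with standard SPDZ proofs (the paper's choice of $f_{\text{SPDZ}}$ as the very functionality being proved is somewhat circular); the paper's approach buys a case analysis that matches its semi-honest-versus-malicious framing of covert adversaries, at the cost of leaving the MAC-soundness and input-extraction arguments informal. The only caveat in your plan is to make sure the invoked ring-MAC lemma (with statistical security slack, as in SPDZ-style MACs over $\mathbb{Z}_{2^{\kappa+s}}$) is actually compatible with the paper's parameterization over $\mathbb{Z}_{2^\kappa}$, a point the paper itself glosses over.
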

\begin{proof}
We consider the case with a semi-honest adversary and a malicious adversary, respectively.

\paratitle{Semi-honest setting}.
Here, we respectively analyze the security under two following settings with above real \textit{vs.} ideal model.

\textit{One semi-honest party $(\mathcal{{D}}_1^{\text{c}}, \mathcal{{D}}_2^{\text{h}})$}:
In this setting, $\mathcal{{D}}_1$ is semi-honest that follows the protocol but may try to learn private information of $\mathcal{{D}}_2$, and the honest $\mathcal{{D}}_2$ is denoted as $\mathcal{{D}}_2^{\text{h}}$.
Here, the simulator $\mathcal{{S}}_1$ of the adversary $\mathcal{{A}}_1$ is constructed to play the role of $\mathcal{{D}}_1$ by interacting with the other domain $\mathcal{{D}}_2$.
During the process of $\Pi_{\text{CrossUnit}}^{\text{SPDZ}}$, each domain implements secret sharing over individual activation maps $\mathcal{{X}}$ and the degree vector $\Theta_i$, and then these shares are broadcasted to other domains as the inputs of $\Pi_{\text{CrossUnit}}^{\text{SPDZ}}$. Since the values of these shares are masked with random numbers $r\leftarrow \mathbb{Z}_{2^\kappa}$, \eg a value $x=\sum_{i=1}^2 x^{(i)}  \mod {2^{\kappa}}$, where $x^{(1)}=x-r$, $x^{(2)}=r$, the actual values of the inputs cannot be recovered with the protection of random mask $r$. To implement \textsf{VectorMul} over shares of activation maps $\mathcal{{X}}$ and degree vector $\Theta_i$, each domain performs local computation of addition and multiplication operations over these shares, these local computation results $[z]$ are broadcasted to open the final result. As the computation parameters  $[z]$ are still masked with random numbers, once $\mathcal{{A}}_1$ receives these intermediate parameters, it is still impossible to obtain the actual values of intermediate parameters $[z]$.
It is obvious that the views of the semi-honest adversary $\mathcal{{A}}_1$ are indistinguishable in both real and ideal model, as represented in
\begin{equation}
\begin{aligned}\label{secyrity_fl}
\nonumber
\textsf{IDEAL}_{\mathcal{S}_1}(\mathcal{{X}},\Theta,\mathcal{{D}})\stackrel{c}{\equiv} \textsf{REAL}_{\mathcal{A}_1}(\mathcal{{X}},\Theta,\mathcal{{D}}).
\end{aligned}
\end{equation}
In the same way, it is proved the security in the setting of $(\mathcal{{D}}_1^{\text{h}}, \mathcal{{D}}_2^{\text{c}})$.

\textit{Two semi-honest parties $(\mathcal{{D}}_1^{\text{c}}, \mathcal{{D}}_2^{\text{c}})$}:
In this setting, both $\mathcal{{D}}_1$ and $\mathcal{{D}}_2$ are semi-honest, the simulators $\mathcal{{S}}_1$ and $\mathcal{{S}}_2$ are constructed to play the roles of $\mathcal{{D}}_1$ and $\mathcal{{D}}_2$, respectively. During the process of interacting with the other domain in $\Pi_{\text{CrossUnit}}^{\text{SPDZ}}$, all the inputs and intermediate parameters are $[\cdot]$-shared before being broadcasted to the other domain. The privacy of the inputs and intermediate results can be protected as the views of both $\mathcal{{A}}_1$ and $\mathcal{{A}}_2$ are indistinguishable between the real and ideal model, which is represented as
\begin{equation}
\begin{aligned}\label{secyrity_fl}
\nonumber
\textsf{IDEAL}_{\mathcal{S}_1,\mathcal{S}_2}(\mathcal{{X}},\Theta,\mathcal{{D}})\stackrel{c}{\equiv} \textsf{REAL}_{\mathcal{A}_1,\mathcal{A}_2}(\mathcal{{X}},\Theta,\mathcal{{D}}).
\end{aligned}
\end{equation}
Based on the above analysis, we conclude that our protocol $\Pi_{\text{CrossUnit}}^{\text{SPDZ}}$ can securely implement under the setting of $(\mathcal{{D}}_1^{\text{c}}, \mathcal{{D}}_2^{\text{h}})$ and $(\mathcal{{D}}_1^{\text{c}}, \mathcal{{D}}_2^{\text{c}})$, which  satisfies privacy requirements of semi-honest adversarial model $\text{Adv}$.

\paratitle{Malicious setting}.
Then, we denote $f_{\text{SPDZ}}$ for the security analysis with a malicious adversary $\mathcal{A}$. $f_{\text{SPDZ}}$ is an ideal function that implements the SPDZ-based cross unit $\Pi_{\text{CrossUnit}}^{\text{SPDZ}}$. Let a secure cross unit be a protocol that securely computes a protocol $\Pi_{\text{CrossUnit}}^{\text{SPDZ}}$ in the $(f_{\text{SPDZ}})$-hybrid model between two data domains ($\mathcal{{D}}_1$ and $\mathcal{{D}}_2$) against a malicious adversary statically corrupting $n-1$ out of $n$ data domains.

To prove the security of $\Pi_{\text{CrossUnit}}^{\text{SPDZ}}$ in the $(f_{\text{SPDZ}})$-hybrid model, we construct a simulator $\mathcal{{S}}$ to  prove that the simulator's view is indistinguishable from the view of real-world model, the specified process is demonstrated as follows.

1) $\mathcal{{S}}$ extracts the local activation maps after the building of a CNN model. These activation maps are adopted for interactive data representations between two data domains.

2) $\mathcal{{S}}$ simulates the pre-process phase by receiving the inputs (\ie activation maps $\mathcal{{X}}_i$ and a degree vector $\Theta_i$) from the adversary $\mathcal{{A}}$, and generates additive shares before broadcasting them to the other domain.

3) $\mathcal{{S}}$ executes the ideal functionality \textsf{VectorMul} over these inputs from $\mathcal{{A}}$. In \textsf{VectorMul}, $\mathcal{{S}}$ simulates the honest parties with correct computation of vector multiplication over secret shares. During these phases, $\mathcal{{S}}$ simulates several times of SPDZ multiplication ``$\otimes$", where all inputs and intermediate parameters are masked with statistically indistinguishable uniformly random numbers $r\in \mathbb{Z}_{2^\kappa}$. Thus, the real distributions of these inputs and intermediate parameters in the simulator $\mathcal{{S}}$ are statistically indistinguishable from the view of $\textsf{REAL}$.

4) $\mathcal{{S}}$ opens the final result over shares with the \textsf{MACCheck} mechanism by simulating $f_{\text{SPDZ}}$. $\mathcal{{S}}$ receives the global MAC key $\alpha$, then splits $\alpha$ into two random shares, one of which is sent to the other domain. For each input share $[m]^{(i)}$, $\mathcal{{S}}$ generates MAC shares $\gamma(m)^{(i)}$ to verify whether secret shares and MAC shares satisfy the invariant $\alpha (\sum m^{(i)})- \sum \gamma(m)^{(i)}$.
The malicious data domain also provides additive shares and MAC shares for the verification of a final result. If the validation fails, then \textsf{MACCheck} aborts the computation. Otherwise, $\mathcal{{S}}$ obtains the final result for the adversary $\mathcal{A}$.

5) $\mathcal{{S}}$ follows the training process by tuning local model for the construction of a CNN.

Based on the above analysis, the view of an adversary $\mathcal{A}$ is indistinguishable between \textsf{IDEAL} and \textsf{REAL} with the underlying SPDZ computation, which is represented as $\textsf{IDEAL}_{\Pi_{\text{CrossUnit}}^{\text{SPDZ}},\mathcal{S}}(\mathcal{{X}},\Theta,\mathcal{{D}})\stackrel{c}{\equiv}
\textsf{REAL}_{\Pi_{\text{CrossUnit}}^{\text{SPDZ}},\mathcal{A}}(\mathcal{{X}},\Theta,\mathcal{{D}}).$
\end{proof}

\begin{theorem}\rm
Let the secure weave unit be a protocol that securely computes a functionality $\Pi_{\text{WeaveUnit}}^{\text{SPDZ}}$ among multiple data domains in the presence of covert adversaries.
\end{theorem}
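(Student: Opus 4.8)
The plan is to mirror the proof of Theorem~1, lifting it from two domains to $n$ domains with a coalition of up to $n-1$ corrupted parties, and then to close the argument by sequential composition of the already-analyzed sub-protocols \textsf{VectorMul} and \textsf{MACCheck}. First I would treat the semi-honest case: let $\mathcal{D}^{c}\subset\mathcal{D}$ with $|\mathcal{D}^{c}|\le n-1$, so that at least one honest domain $\mathcal{D}^{h}$ survives. In Phase~1 every private value (the entries of $\mathcal{X}_i^l$ and of $\Theta_i$) and every intermediate quantity broadcast in Phase~2 is $[\cdot]$-shared as $m^{(n)}=m-\sum_{j=1}^{n-1}r_j$, $m^{(j)}=r_j$ with $r_j\leftarrow\mathbb{Z}_{2^\kappa}$. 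Since the coalition observes at most $n-1$ of the $n$ additive shares of each such value and the missing share is held by $\mathcal{D}^{h}$, the joint view of $\{\mathcal{A}_i\}_{\mathcal{D}_i\in\mathcal{D}^{c}}$ is distributed exactly as a tuple of independent uniform elements of $\mathbb{Z}_{2^\kappa}$; hence the simulators $\{\mathcal{S}_i\}$ reproduce the adversarial view by sampling uniform shares, giving $\textsf{IDEAL}_{f,\mathcal{S}}(\mathcal{{X}},\Theta,\mathcal{{D}})\stackrel{c}{\equiv}\textsf{REAL}_{f,\mathcal{A}}(\mathcal{{X}},\Theta,\mathcal{{D}})$.

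Next I would handle the malicious case in the $(f_{\text{SPDZ}})$-hybrid model, where $f_{\text{SPDZ}}$ abstracts the SPDZ multiplication $\otimes$ together with the opening and verification machinery. I would build one simulator $\mathcal{S}$ that plays all honest domains and: (i) extracts the corrupted domains' inputs $(\mathcal{X}_i^l,\Theta_i)$ as submitted to $f_{\text{SPDZ}}$; (ii) emulates Phase~1 by handing $\mathcal{A}$ uniformly random shares on behalf of each honest domain; (iii) emulates Phase~2 by invoking the ideal \textsf{VectorMul} on the extracted vectors $\mathcal{V}^{(i)}$ and $\Theta_j^{(i)}$ — because the masks $\mu,\nu$ revealed during each $\otimes$ are one-time padded by the triple shares $a^{(i)},b^{(i)}$, the transcript is statistically independent of the honest inputs and is sampled uniformly by $\mathcal{S}$; (iv) emulates the verification by drawing honest MAC-key shares $\alpha^{(i)}$ consistent with the global $\alpha$ supplied by $f_{\text{SPDZ}}$, forming the $\sigma^{(i)}$ as in \textsf{MACCheck}, and outputting $\bot$ whenever the reconstructed linear combination fails $\sum_i\sigma^{(i)}\ne 0$; and (v) runs the honest tuning of Phase~3. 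Whenever $\mathcal{S}$ does not abort it forwards the extracted inputs to $\mathcal{T}$ and relays $\mathcal{T}$'s outputs, so the joint distribution agrees with $\textsf{REAL}$ up to the failure event of the MAC check.

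The crux — and the step I expect to be the main obstacle — is the soundness of \textsf{MACCheck} against a dishonest majority over the ring $\mathbb{Z}_{2^\kappa}$ rather than a field: I must show that if the coalition opens any value inconsistent with its committed $[\cdot]$-sharing, then $\Pr[\sum_i\sigma^{(i)}=0]$ is negligible in $\kappa$. The argument rests on two facts I would establish: the honest domain's share $\alpha^{(h)}$ of the MAC key stays information-theoretically hidden from the coalition throughout (it only ever appears inside linear combinations masked by honest randomness), so any forged MAC is independent of $\alpha^{(h)}$; and the random vector $r\leftarrow\mathbb{Z}_{2^\kappa}^{n}$ agreed in \textsf{MACCheck} compresses a nonzero error vector $\Delta\ne 0$ to $\langle r,\Delta\rangle$, which over $\mathbb{Z}_{2^\kappa}$ is nonzero with probability at least $1/2$ and can be amplified to $2^{-\kappa}$ by repetition (the SPDZ$_{2^k}$-style analysis). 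Combining these, a cheating adversary is detected except with probability $\le 2^{-\kappa}+\mathrm{negl}(\kappa)$, matching the abort behaviour of the ideal functionality. With \textsf{VectorMul} and \textsf{MACCheck} shown to securely realize their ideal counterparts, the weave unit $\Pi_{\text{WeaveUnit}}^{\text{SPDZ}}$ — being their sequential composition across the $h_l\times w_l\times c_l$ location-specific vectors and across layers — inherits security by the modular composition theorem, which yields $\textsf{IDEAL}_{\Pi_{\text{WeaveUnit}}^{\text{SPDZ}},\mathcal{S}}(\mathcal{{X}},\Theta,\mathcal{{D}})\stackrel{c}{\equiv}\textsf{REAL}_{\Pi_{\text{WeaveUnit}}^{\text{SPDZ}},\mathcal{A}}(\mathcal{{X}},\Theta,\mathcal{{D}})$.
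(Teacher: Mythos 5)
Your proposal is correct and follows the same skeleton as the paper's own argument: a two-part case analysis (semi-honest, then malicious in an $(f_{\text{SPDZ}})$-hybrid), with the semi-honest case reduced to the uniformity of $n-1$ out of $n$ additive shares as an extension of Theorem~1, and the malicious case handled by a simulator that replaces the honest domains' shares with random values and relies on \textsf{MACCheck} to force either correct opened values or an abort. Where you go beyond the paper is in making two steps explicit that the paper's proof leaves implicit: (i) input extraction by the simulator through the ideal SPDZ functionality, and (ii) the soundness of \textsf{MACCheck} against a dishonest majority over the ring $\mathbb{Z}_{2^\kappa}$ rather than a field --- the paper simply asserts that the adversary cannot forge a consistent MAC because it cannot modify $\gamma(x)$ without the honest key share, whereas you correctly observe that the random-linear-combination check over $\mathbb{Z}_{2^\kappa}$ only catches a nonzero error with constant probability per invocation (because of zero divisors) and must be amplified in the SPDZ$_{2^k}$ style to reach the $2^{-\kappa}$ detection bound; you also invoke the modular composition theorem explicitly to stitch the per-location \textsf{VectorMul} and \textsf{MACCheck} invocations into security of the whole weave unit. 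These additions are strengthenings rather than divergences: the paper's proof implicitly needs both, so your route yields the same conclusion with the cheating-detection probability actually quantified, at the cost of importing the SPDZ$_{2^k}$ ring analysis that the paper never states.
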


\begin{proof}
We separately analyze $\Pi_{\text{WeaveUnit}}^{\text{SPDZ}}$ with the semi-honest setting and malicious setting.

\paratitle{Semi-honest setting}. Let $\mathcal{A}$ be an augmented semi-honest adversary and $\mathcal{S}$ be a simulator that is guaranteed to the security of $\Pi_{\text{WeaveUnit}}^{\text{SPDZ}}$~\cite{furukawa2017high}.
We construct the simulator $\mathcal{S}$ can do everything what a data domain $\mathcal{D}_i$ can do. Here, it is an extension of \textbf{Theorem} 1 under the multi-domain settings.

\paratitle{Malicious setting}.
In the malicious ideal-model, each data domain $\mathcal{D}_i$ holds an activation map $\mathcal{X}_i$ ($\mathcal{D}_i\in \mathcal{D}$). There is a PPT simulator $\mathcal{S}$ can select and change the inputs for a corrupted data domain, the main idea is that $\mathcal{S}$ executes a series of modifications to our protocol. In our $\textsf{HYB}$ model, $\textsf{hyb}$ denotes a modification to the predefined protocol, the specified process is shown as follows.

In the $\textsf{hyb}$, $\mathcal{S}$ changes all secret shares (\ie shares of an activation map $\mathcal{X}_i$ and a degree vector $\Theta_i$) sent by honest data domains to other domains with shares of random values. It is obvious that the adversary $\mathcal{A}$ cannot learn extra shares of MAC key $\alpha^{(i)}$. However, all honest domains execute $\textsf{MACCheck}$ over shares $x^{(i)}$ of the final result and corresponding MAC key shares $\alpha_i$ in \textbf{Phase 2}, where $x=\sum x^{(i)} \mod {2^{\kappa}}$, $\alpha=\sum \alpha^{(i)}$ and $\gamma(x)_i=\alpha_i x^{(i)} $. Each data domain verifies the opened result $x$ with $\textsf{MACCheck}$ by judging if this is true:
  \begin{equation}
\begin{aligned}
\nonumber
\gamma(x)=\sum \gamma(x)_i =\sum \alpha^{(i)} x^{(i)} \mod {2^{\kappa}}=\alpha x.
\end{aligned}
\end{equation}
Since $\mathcal{S}$ can only change the content of secret shares, it cannot modify an additive share of the corresponding MAC value $\gamma(x)$, which is computed by using additive shares of MAC key $\alpha_i$ on each data domain $\mathcal{D}_i$.
\textsf{MACCheck} can enable each domain to correctly compute in the weave unit. If the inputs and opened values don't pass the MAC check, then all domains will receive ``$\bot$" and abort computation. Otherwise, the final computation results are returned to the weave unit.

Therefore, the simulator $\mathcal{S}$ has completed the simulation process, where $\mathcal{S}$ successfully simulates $\textsf{IDEAL}$ without leaking original values of activation maps $\mathcal{X}$ and degree vectors $\Theta$ for all data domains $\mathcal{D}_i\in \mathcal{D}$.
Thus, it indicates the indistinguishability between this hybrid $\textsf{HYB}$ and real model $\textsf{REAL}$ based on above analysis, which is represented as $\textsf{IDEAL}_{\Pi_{\text{WeaveUnit}}^{\text{SPDZ}},\mathcal{S}}(\mathcal{X},\Theta,\mathcal{D})\stackrel{c}{\equiv}
\textsf{REAL}_{\Pi_{\text{WeaveUnit}}^{\text{SPDZ}},\mathcal{A}}(\mathcal{X},\Theta,\mathcal{D}).$
\end{proof}

\section{Performance Evaluation} \label{section:Performance Evaluation}
In this section, we discuss experimental setup and evaluate VerifyTL on real-world datasets, and we compare VerifyTL with existing solutions.
\subsection{Experimental Setup}
The experiments were executed in Java and were implemented on a six-core 2.80GHz machine with Inter i5-8400H processor, 16GB RAM, running Ubuntu, and VerifyTL is evaluated in parallel.
We begin the experiments by introducing training datasets. The communication among different data domains relies on TCP with authenticated channels (through TLS).

\textbf{Datasets}. We evaluated our methods over two different real-world datasets.
\begin{itemize}
  \item MNIST\footnote{http://yann.lecun.com/exdb/mnist}. MNIST contains 60K training samples and 10K test samples. Each sample is a grayscale of 10 different handwritten digits formatted as $28*28$ images.

  \item Fashion MNIST\footnote{https://www.kaggle.com/zalando-research/fashionmnist}. The size of training data is 60K, and the size of test data is 10K, while a fashion
MNIST instance is a $28*28$ image contains 10 labels as ``T-shirt", ``trouser", ``pullover", ``dress", ``coat", ``sandal", ``shirt", ``sneaker", ``bag", and ``boot".
\end{itemize}

\begin{figure*}[!ht]
	  \centering
\subfigure[]{\includegraphics[width=1.4in]{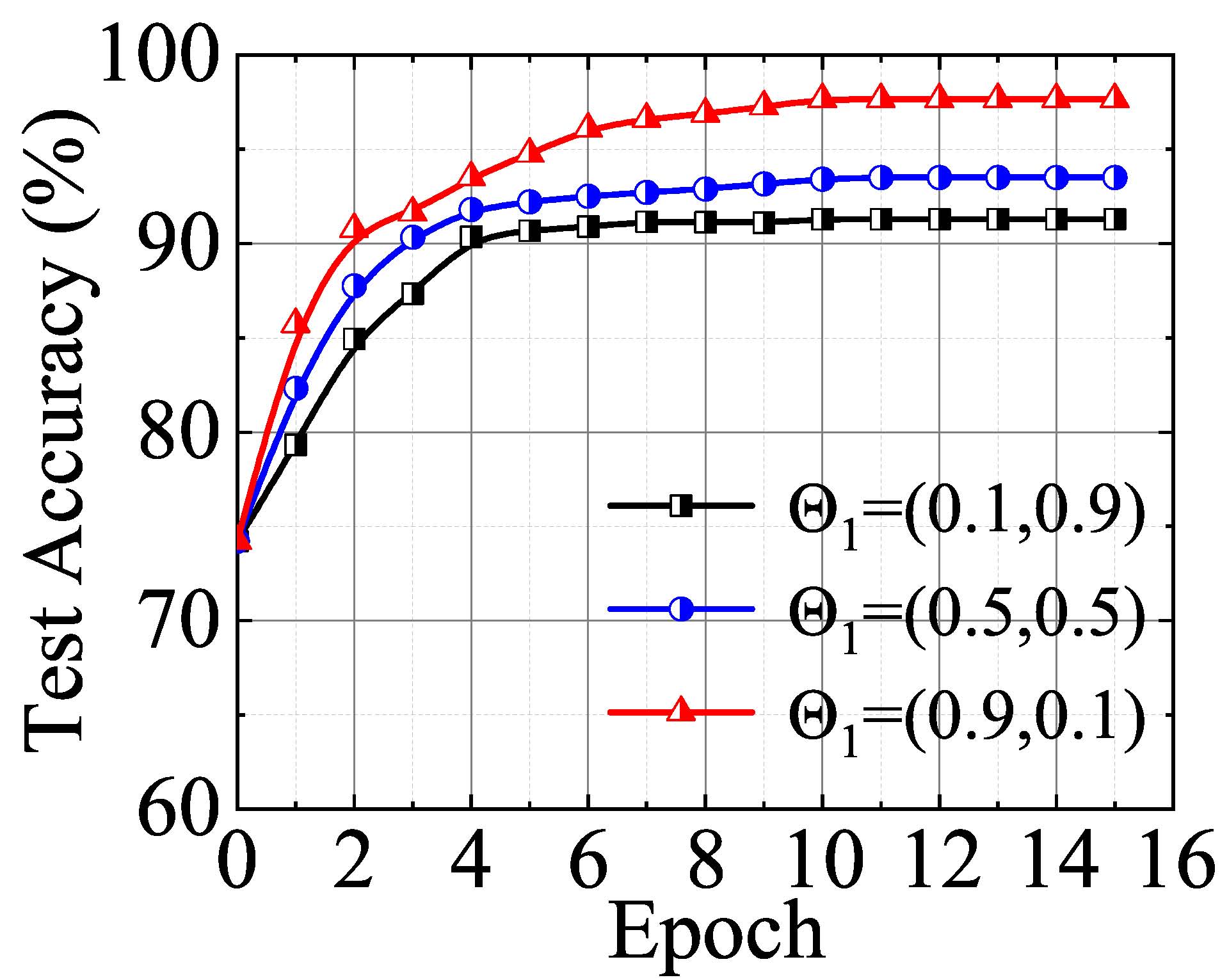}}
\subfigure[]{\includegraphics[width=1.4in]{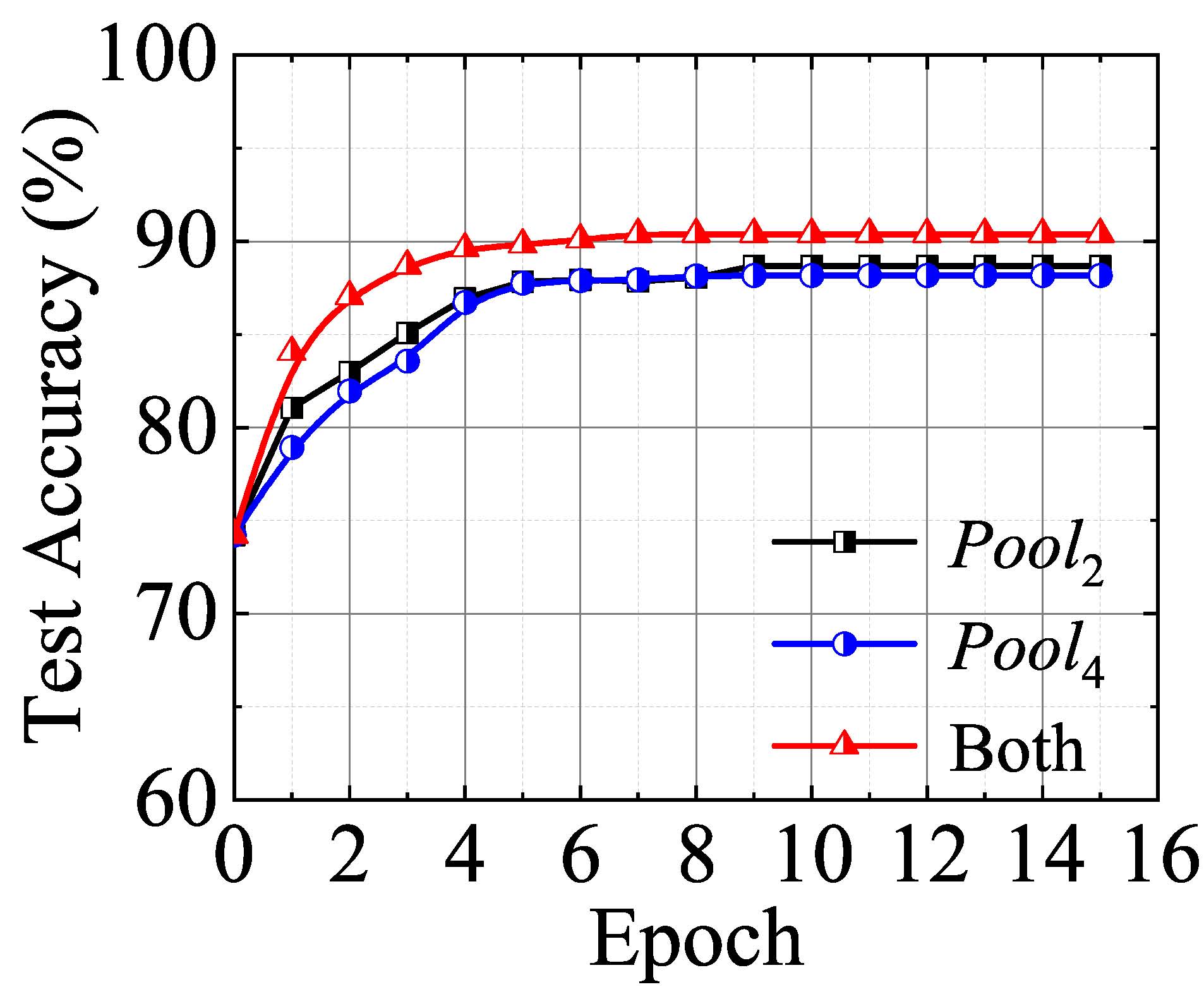}}
\subfigure[]{\includegraphics[width=1.4in]{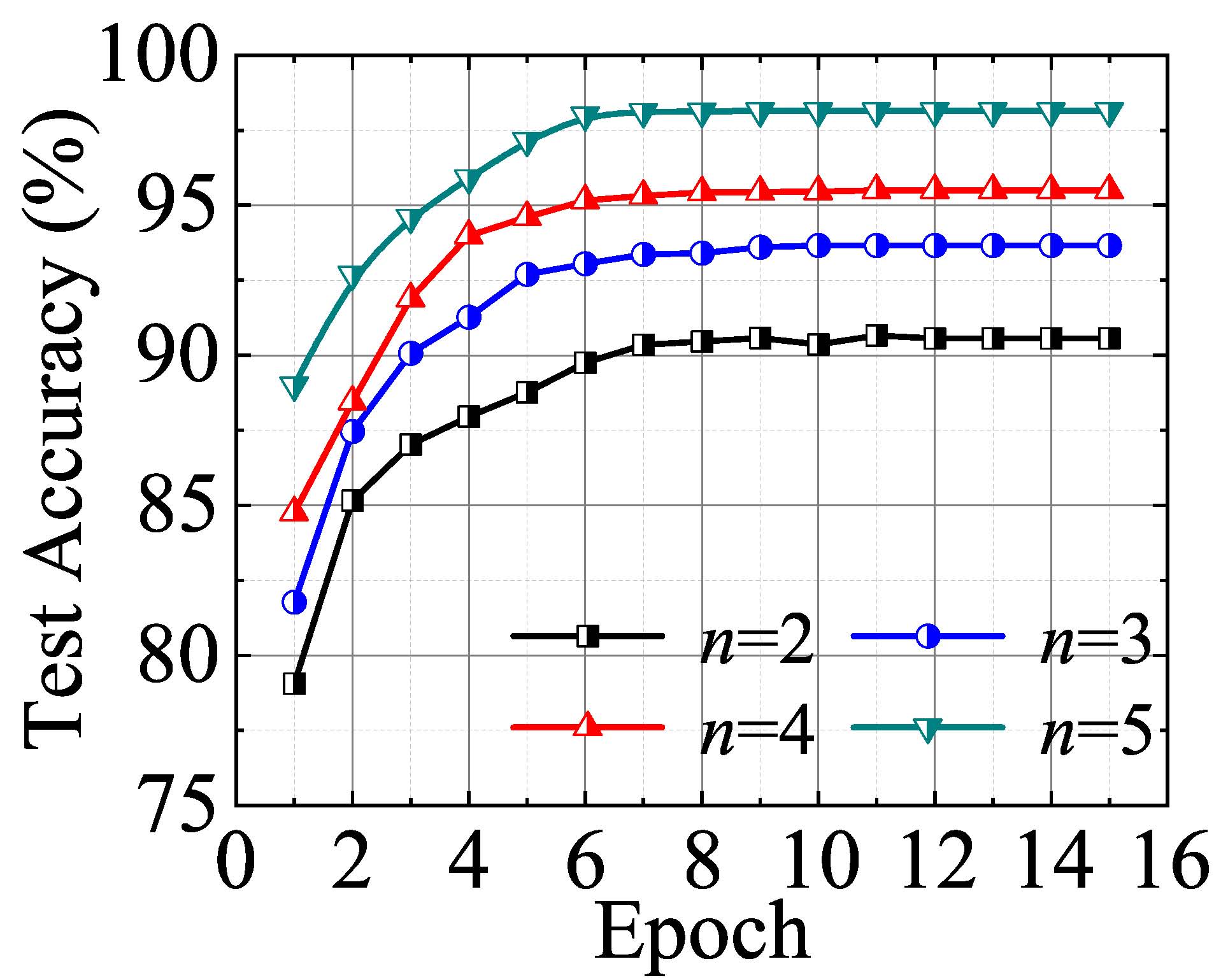}}
\subfigure[]{\includegraphics[width=1.4in]{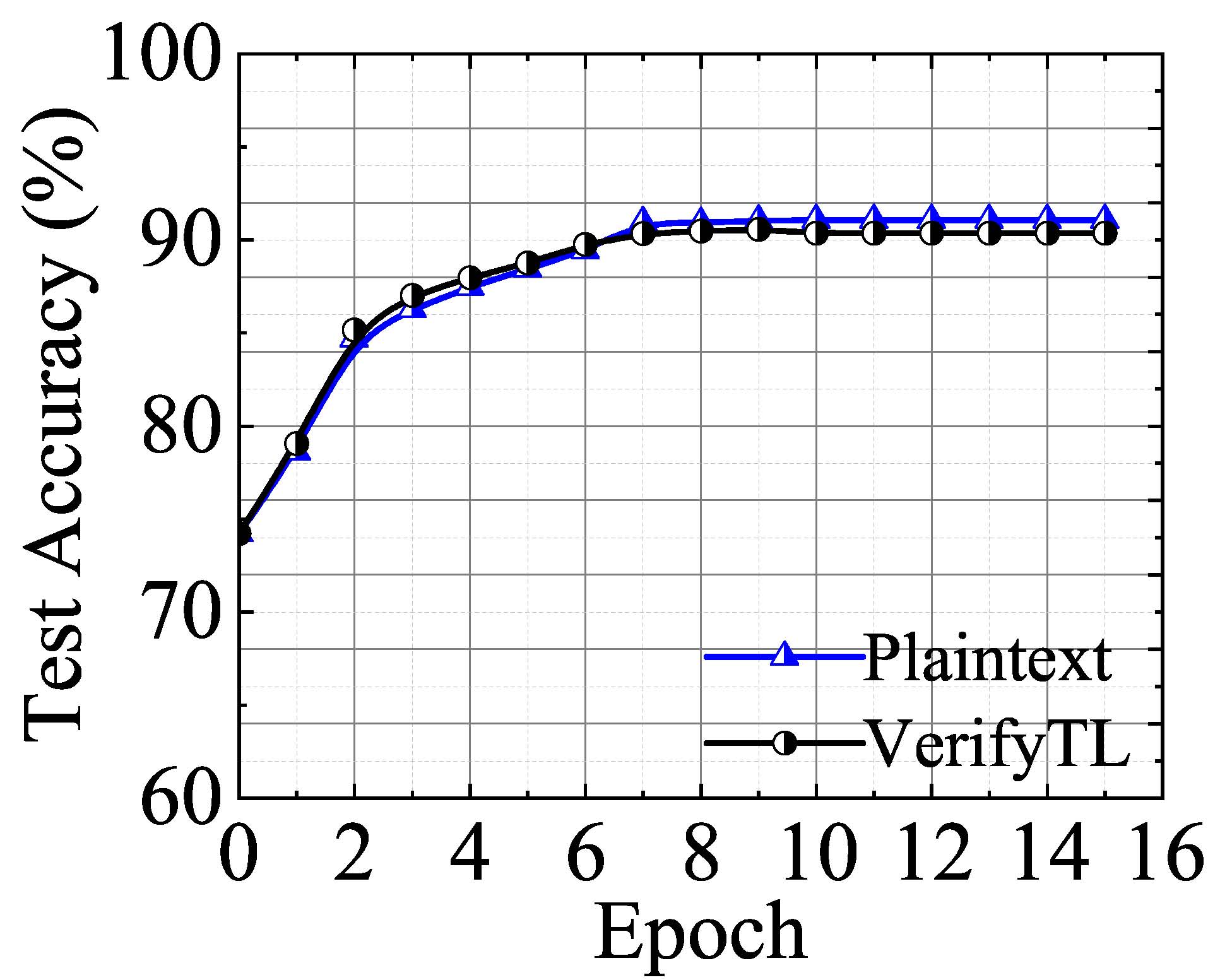}}
\subfigure[]{\includegraphics[width=1.4in]{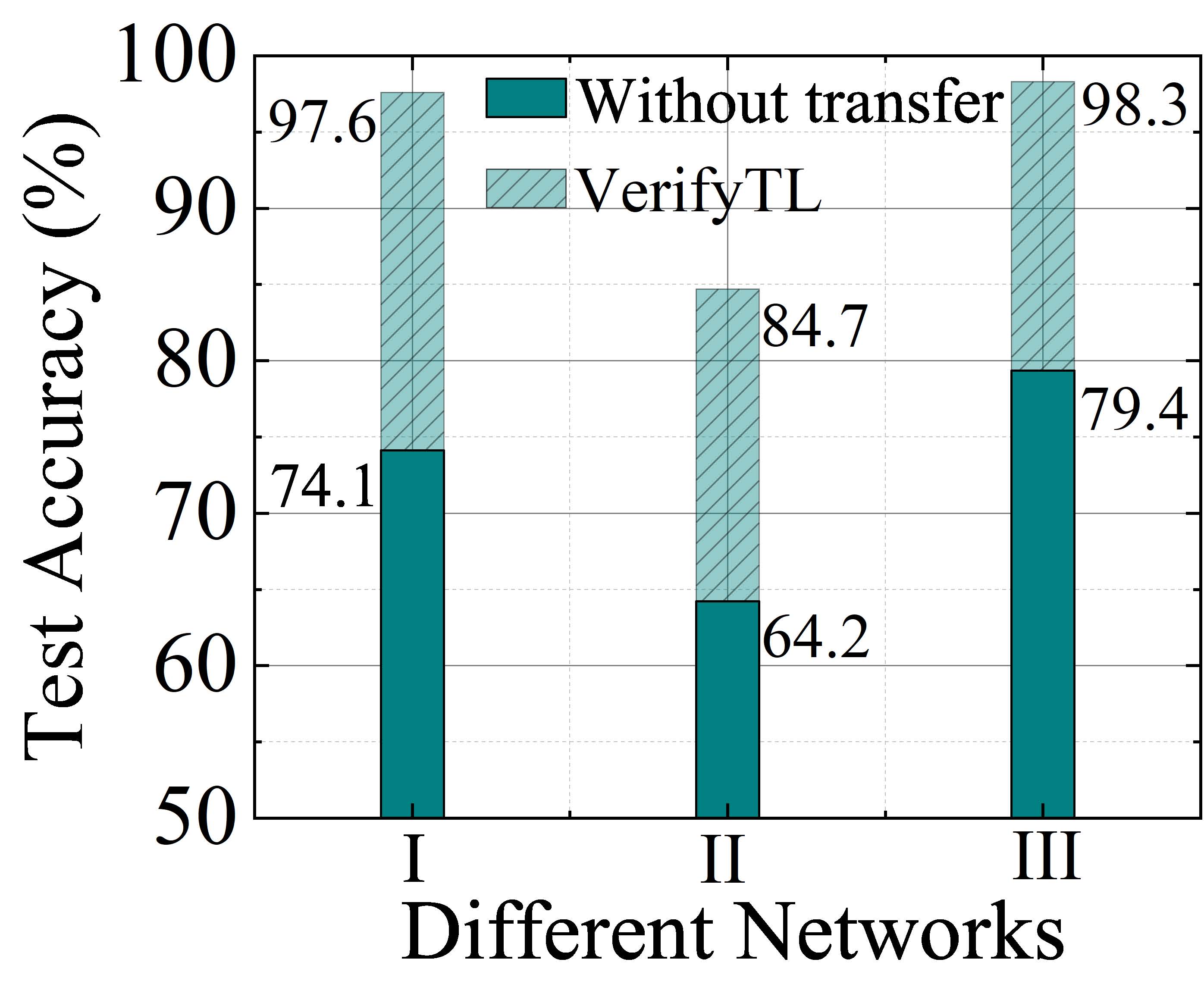}}

  \caption{Accuracy of {VerifyTL} on MNIST: (a) Accuracy of $\mathcal{D}_1$, with different degree vector $\Theta_1$. (b) Accuracy for a cross unit after $Pool_1$, $Pool_2$ and $Pool_1\&Pool_2$. (c) Accuracy of VerifyTL for different size of data domains. (d) Accuracy for VerifyTL and plaintext . (e) Accuracy for different networks.} \label{Effectivenessperformance}
\end{figure*}
\textbf{Network}. We adopt LeNet~\cite{lecun1998gradient} as our CNN architecture. The CNN model is denoted as \textsf{Network \textrm{I}}, which consists of $L=7$ layers such as 1 input layer, 2 convolution layers $Conv$, 2 pooling layers $Pool$, 1 full connection layer $Full$ and 1 output layer. The details are shown in Table~\ref{CNN}.

\begin{table}[!ht]
\centering \caption{\textsf{Network \textrm{I}} architecture}
\label{CNN}
\tabcolsep 6pt
\begin{threeparttable}[b] %
\begin{tabular*}{3.5in}{l||c|c|c|c}
\toprule
\hline
Layer & Parameters & Connections & Output& Unit \\ \hline
$Conv_1$&$156$ &$89,856$& $24*24*6$ &$5*5*6$ \\ \hline
$Pool_2$&$12$ &$4,320$& $12*12*6$ &$2*2*1$ \\ \hline
$Conv_3$ &$1,516$ &$97,024$& $8*8*12$ &$5*5*12$ \\ \hline
$Pool_4$&$32$ &$960$& $4*4*12$ &$2*2$ \\ \hline
$Full_5$ &$1,930$&$1,930$&$10*1*1$&12 \\ \hline
\bottomrule
\end{tabular*}
\begin{tablenotes}
\item \textbf{Notes}. The size of the output and unit on each layer is denoted as $h_l*w_l*c_l$. The unit is a convolution kernel on a $Conv$ layer, it is a pooling unit on a $Pool$ layer, and it is neurons on a $Full$ layer. $Full_5$ layer contains 1,930 trainable parameters and 10 neurons from the design of the output layer.
\end{tablenotes}
\end{threeparttable} %
\end{table}

\textbf{Parameters}. We set up the parameters in VerifyTL with a security parameter $\kappa=128$, precision $p=2^8$ and the size of data domains $n$ varies in the range $[2,10]$. All domains adopt the same CNN model $\mathcal{N}et$, where $batch$ $size=128$, $learning$ $rate$ = 0.01, $dropout=0.8$.

\subsection{Effectiveness}
Fig.~\ref{Effectivenessperformance} evaluates the test accuracy of VerifyTL according to the following factors. We adopt the 10-fold cross validation technique for CNN accuracy.

\textsf{Degree Vector}. Fig.~\ref{Effectivenessperformance}(a) depicts the test accuracy of a data domain $\mathcal{D}_1$ by varying with the value of degree vector $\Theta_1$, where the size of training data of $\mathcal{D}_1$ is 1K, and the size of training data of $\mathcal{D}_2$ is 5K, and VerifyTL runs over two data domains $\mathcal{D}_1$, $\mathcal{D}_2$. We discover that the accuracy increases with a bigger value of $\theta_{12}$. The reason is that a bigger $\theta_{12}$ means more knowledge can be employed from $\mathcal{D}_2$ to $\mathcal{D}_1$, and $\mathcal{D}_2$ owns larger training data for the accuracy improvement on $\mathcal{D}_1$.
The training accuracy is stable when $Epoch=10$, the accuracy is $97.6\%$ of $\Theta_1=(\theta_{11}=0.9,\theta_{12}=0.1)$, the accuracy is $93.4\%$ of $\Theta_1=(\theta_{11}=0.5,\theta_{12}=0.5)$, the accuracy is $91.3\%$ of $\Theta_1=(\theta_{11}=0.1,\theta_{12}=0.9)$, respectively.

\textsf{Transfer Unit}. Fig.~\ref{Effectivenessperformance}(b) describes the variation of the test accuracy with a cross unit $\Pi_{\text{CrossUnit}}^{\text{SPDZ}}$, which is adopted after $Pool_2$, $Pool_4$, and $Pool_2\&Pool_4$. The size of training samples on each domain is 1K. We discover that $\Pi_{\text{CrossUnit}}^{\text{SPDZ}}$ adopted after $Pool_2$ has almost the same the accuracy as $\Pi_{\text{CrossUnit}}^{\text{SPDZ}}$ adopted after $Pool_4$, while $\Pi_{\text{CrossUnit}}^{\text{SPDZ}}$ after $Pool_2\&Pool_4$ has better accuracy than others. Since there are more data representations are involved in the CNN building on two data domains,  $\Pi_{\text{CrossUnit}}^{\text{SPDZ}}$ is adopted after all pooling layers $Pool_2\&Pool_4$ to guarantee accuracy.

Based on above evaluations, in our default evaluation, we consider that the number of data domains $n\in[2,10]$, and 1K training samples on each data domain $\mathcal{D}_i$. We set the elements in a degree vector $\Theta_i$, where $\theta_t=0.1, \theta_s=1-\Sigma \theta_{t_{ij}}$. A transfer unit is adopted after each pooling layer (\ie $Pool_2\&Pool_4$). Besides, without emphasis, we employ the \textsf{Network \textrm{I}} architecture in VerifyTL.

\textsf{Data Domains}. Fig.~\ref{Effectivenessperformance}(c) shows that the test accuracy varies with the increasing size of data domains $n\in[2,5]$. We observe that the test accuracy increases with the growth of $n$. When $n=5$, the accuracy is $98.2\%$. This is because that more data representations are transferred among data domains with the increase of $n$. Thus, more knowledge can be adopted to improve accuracy on each data domain.

\textsf{Plaintext Comparison}. We show the accuracy comparison between VerifyTL and the proposed scheme over plaintexts (Fig.~\ref{Effectivenessperformance}(d)). We note that the accuracy of VerifyTL is similar to that of plaintexts with negligible accuracy difference. When $n=2$, the accuracy of VerifyTL is $90.4\%$, while the accuracy over plaintexts is $90.6\%$. This is because VerifyTL enables the privacy and verification over secret shares by adopting the approximation method to convert a rational number to the integer field, which may incur computation errors.

\textsf{Network architecture}. We evaluate the impact of different network architectures on the accuracy of VerifyTL. Fig.~\ref{Effectivenessperformance}(e) illustrates the test accuracy for different network architectures, where $Epoch=10$ and $n=5$.
We tested our evaluation over three kinds of network architectures (\textsf{Network \textrm{I, II, III}}), where \textsf{Network \textrm{II}} has the simplest architecture, while \textsf{Network \textrm{III}} has the most sophisticated one. The details are represented in Tables~\ref{CNN} -- \ref{CNN3}.
We notice the accuracies on all \textsf{Network \textrm{I, II, III}} of VerifyTL have significant improvement than those without transfer units. It is consistent with our scheme that VerifyTL is applicable to different kinds of CNN architectures.
\begin{table}[!ht]
\centering \caption{\textsf{Network \textrm{II}} architecture}
\label{CNN2}
\tabcolsep 6pt
\begin{threeparttable}[b] %
\begin{tabular*}{3.5in}{l||c|c|c|c}
\toprule
\hline
Layer & Parameters & Connections & Output& Unit \\ \hline
$Conv_1$&$520$ &$299,520$& $24*24*20$ &$5*5*20$ \\ \hline
$Pool_2$&$40$ &$14,400$& $12*12*20$ &$2*2*1$ \\ \hline
$Full_3$ &$288,000$&$288,000$&$10*1*1$&100 \\ \hline
\bottomrule
\end{tabular*}
\end{threeparttable} %
\end{table}
\begin{table}[!ht]
\centering \caption{\textsf{Network \textrm{III}} architecture}
\label{CNN3}
\tabcolsep 6pt
\begin{threeparttable}[b] %
\begin{tabular*}{3.5in}{l||c|c|c|c}
\toprule
\hline
Layer & Parameters & Connections & Output& Unit \\ \hline
$Conv_1$&$156$ &$122,304$& $28*28*6$ &$5*5*6$ \\ \hline
$Pool_2$&$12$ &$5,880$& $14*14*6$ &$2*2*1$ \\ \hline
$Conv_3$ &$1,516$ &$151,600$& $10*10*16$ &$5*5*16$ \\ \hline
$Pool_4$&$32$ &$2,000$& $5*5*16$ &$2*2*1$ \\ \hline
$Conv_5$ &$48,120$ &$48,120$& $120*1*1$ &$5*5*120$ \\ \hline
$Full_6$ &$10,164$&$10,164$&$10*1*1$&84 \\ \hline
\bottomrule
\end{tabular*}
\end{threeparttable} %
\end{table}

\subsection{Efficiency}
\begin{figure*}[!ht]
	  \centering

\subfigure[]{\includegraphics[width=1.73in]{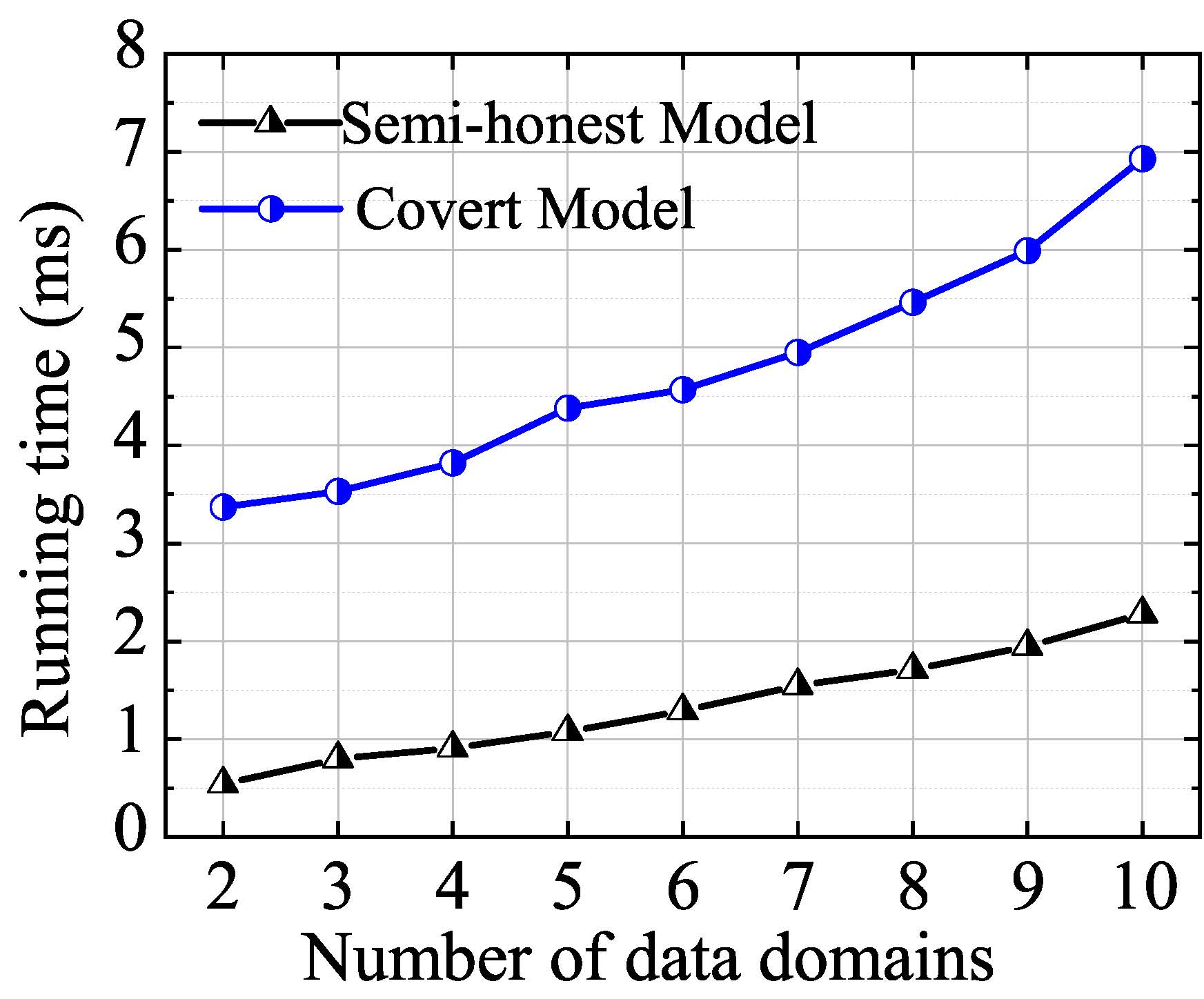}}
\subfigure[]{\includegraphics[width=1.6in]{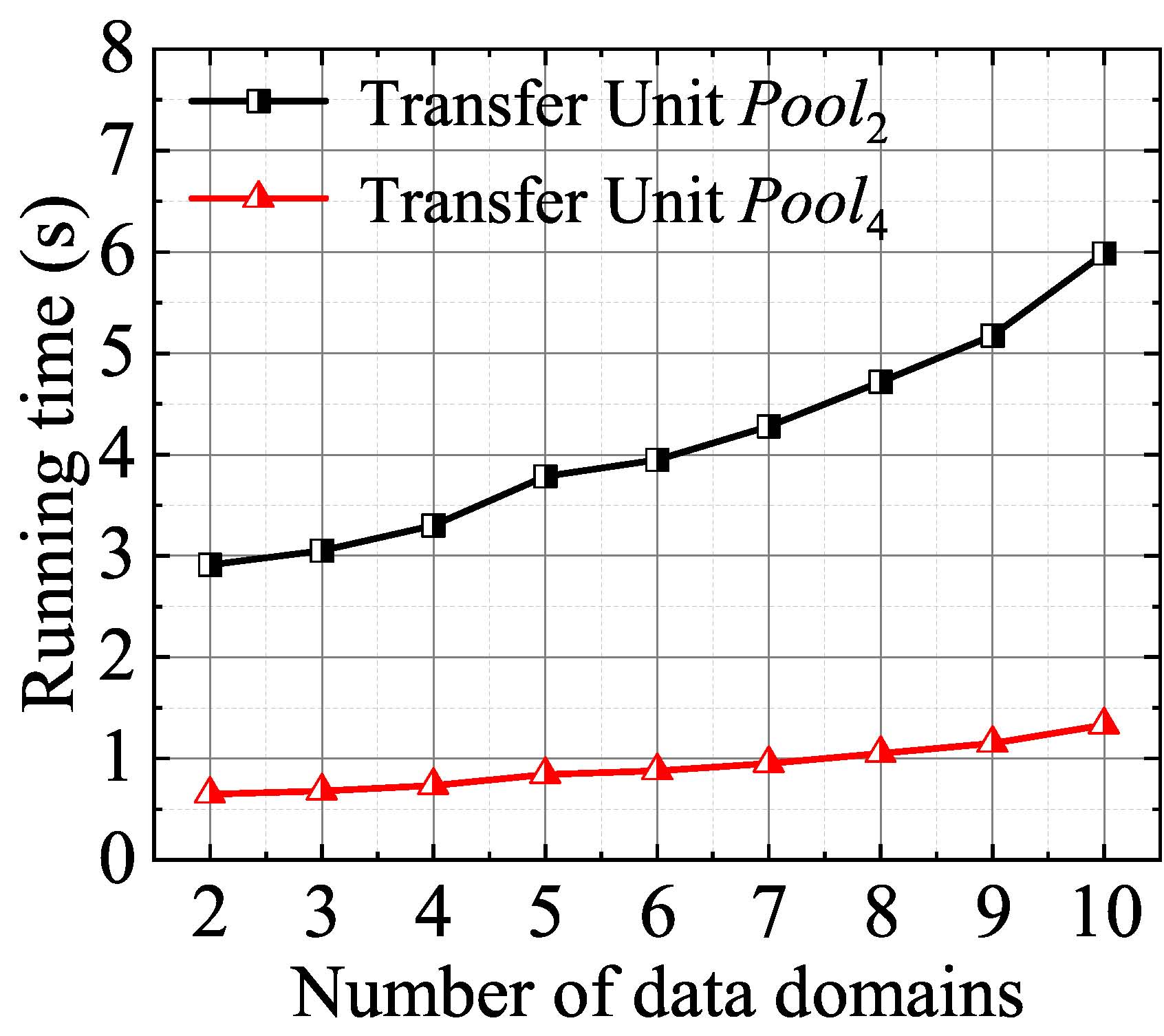}}
\subfigure[]{\includegraphics[width=1.75in]{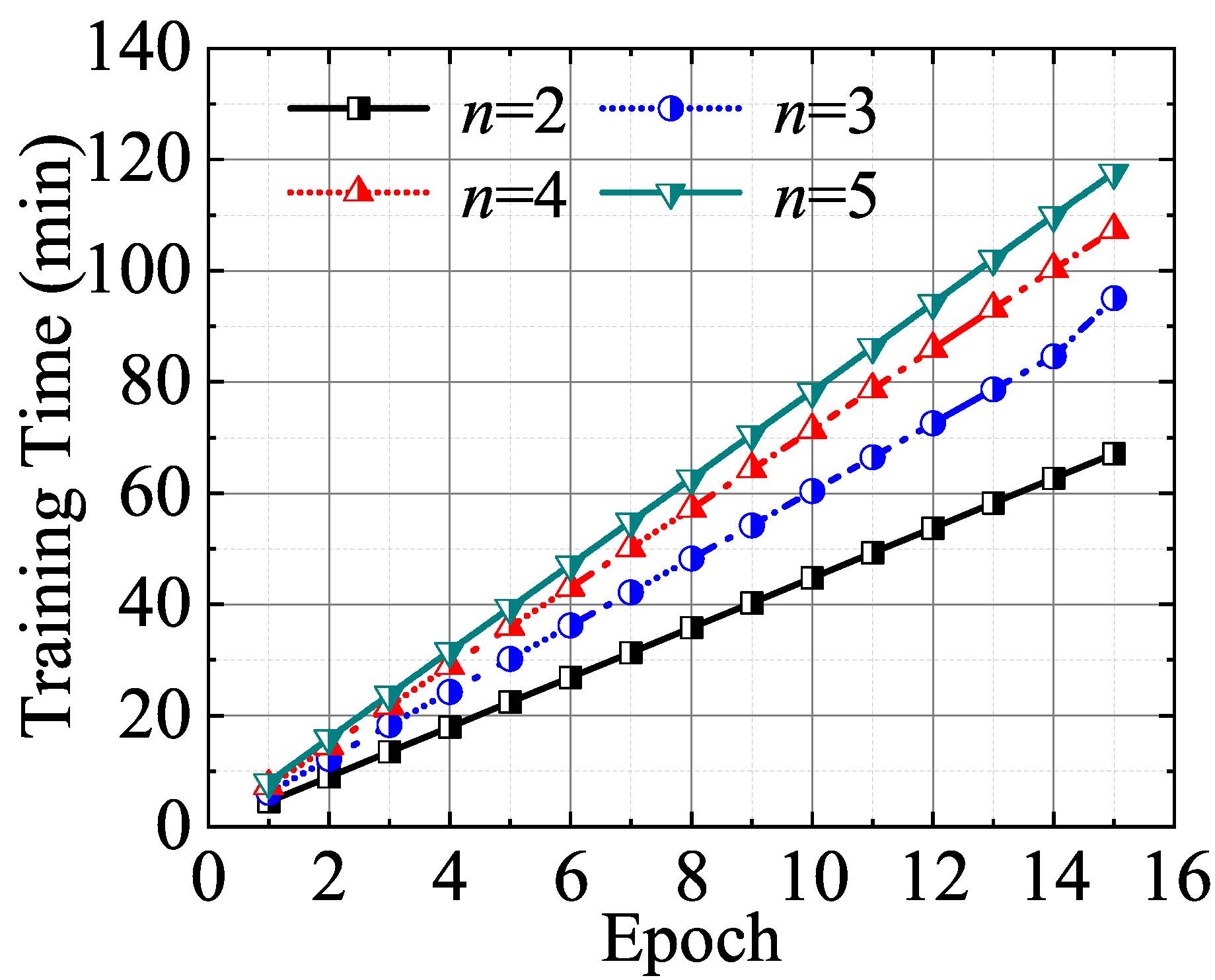}}
\subfigure[]{\includegraphics[width=1.67in]{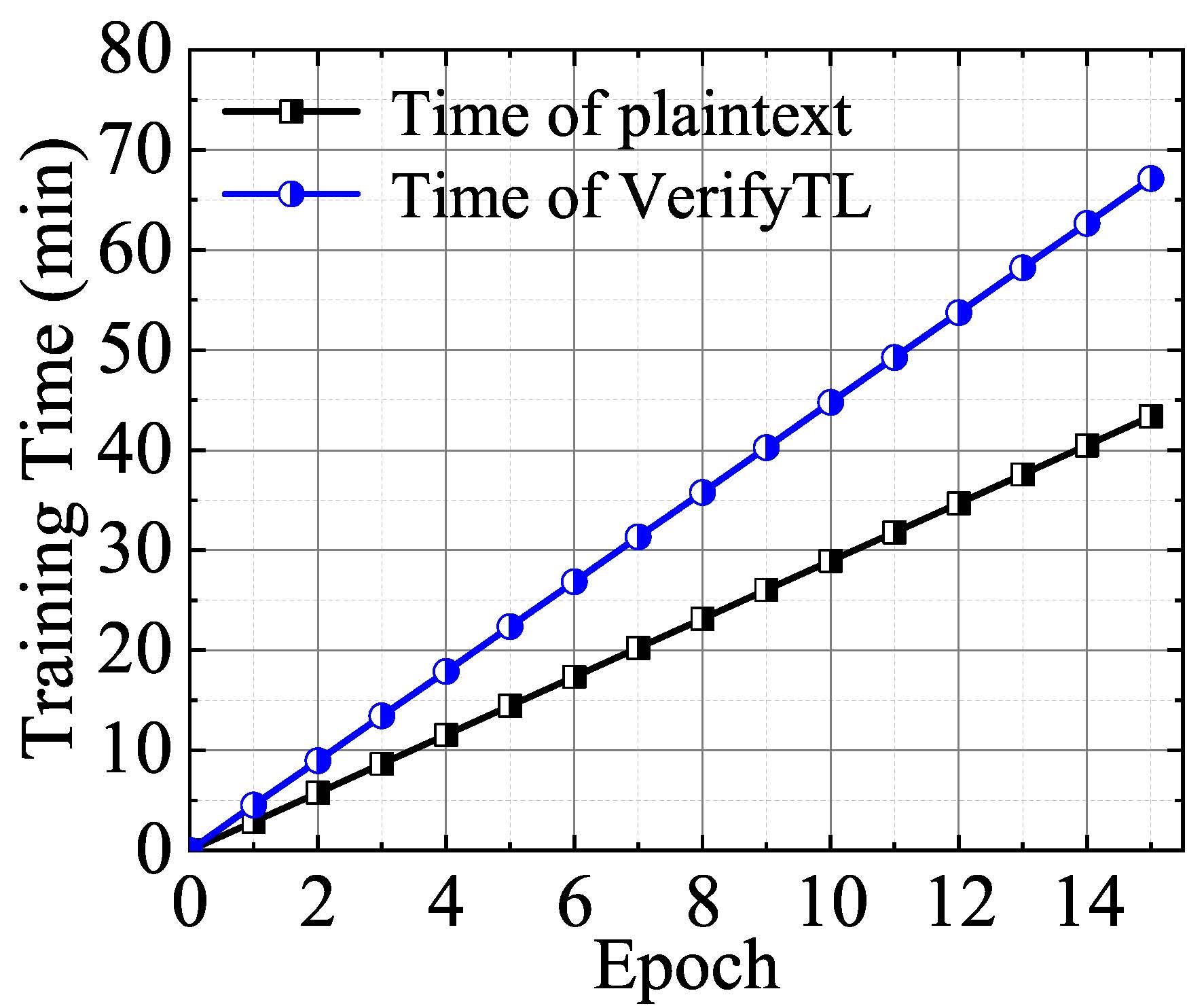}}
  \caption{Performance of {VerifyTL}: (a) Running time of \textsf{VectorMul}. (b) Running time of transfer unit. (c) Training time of VerifyTL with number of data domains (d) Training time of VerifyTL and plaintext ($n=2$). } \label{Efficiencyperformance}
\end{figure*}

\paratitle{Theoretical Analysis}. We analyze the computational complexity and communication complexity of a secure and verifiable transfer unit.

\textit{Computational complexity}. In a secure and verifiable transfer unit, the computational complexity mainly relies on $\textsf{VectorMul}$ as the setup stage is operated offline, thus computation and communication overheads are ignored.
Since the computational complexity of $\textsf{VectorMul}$ depends on the size of vectors $\mathcal{V}_i$ and $\Theta_i$, $n$ times SPDZ multiplication operations are involved in a $\textsf{VectorMul}$, which costs $\mathcal{O}(n)$ in a SPDZ multiplication with linear opearations. Thus, $\textsf{VectorMul}$ costs $\mathcal{O}(n^2)$ time over all domains.
As the size of transferred activation maps is $\mathcal{X}^l\leftarrow \mathbb{R}^{h_l\times w_l\times c_l}$, the size of transferred degree vector $\Theta_i$ is $n$, it involves ${h_l\times w_l\times c_l}$ times $\textsf{VectorMul}$, thus a secure and verifiable transfer unit costs $\mathcal{O}(n^2*h_l*w_l*c_l)$ time over all domains.

\textit{Communication complexity}. A secure and verifiable collaborative transfer unit has communication complexity $\mathcal{O}(n^2(h_l*w_l*c_l+n))$:
Since the communication complexity relies on the size of activation maps and the number of data domains, each data domains communicates $\mathcal{O}(n(h_l*w_l*c_l+n))$ data items, where the size of a transferred activation map is $h_l*w_l*c_l$, the size of a transferred degree vector is $n$.

\paratitle{Experimental Analysis}.
Fig.~\ref{Efficiencyperformance} demonstrates the execution time of the following sections of VerifyTL, which is an average over $100$ trials.

{\textsf{VectorMul}}. Fig.~\ref{Efficiencyperformance}(a) shows the running time of \textsf{VectorMul} with the semi-honest model and covert security model, respectively. We observe that the running time increases with the growth of data domains $\mathcal{D}$. Since a bigger size of vector $\mathcal{X}^l$ and $\Theta_i$ is involved, more times of SPDZ multiplication are executed. Also, \textsf{VectorMul} costs more overhead in our covert security model to guarantee the verification of computation results compared with \textsf{VectorMul} without \textsf{MACCheck} in the semi-honest setting. This is expected, as verification process is required to spend execution time.
It creates a tradeoff between security and efficiency as a covert model provides a higher security level than a semi-honest. When $n=10$, \textsf{VectorMul} in our threat model costs 6.93 ms, while \textsf{VectorMul} in semi-honest model costs 2.14 ms. Thus, the increase of verification time is within an acceptable limit for implementing the covert security model. 

\textsf{Transfer Unit}. In Fig.~\ref{Efficiencyperformance}(b), we discover that the execution time of a transfer unit is affected by the size of data domains and the size of inputs. The running time of a transfer unit after $Pool_2$ is more than that of it after $Pool_4$. The reason is that the input of a transfer unit after a $Pool_2$ layer is $\mathcal{X}^{Pool_2}$ with the size of $12*12*6$, while the input of a transfer unit after a $Pool_4$ layer is $\mathcal{X}^{Pool_4}$ with the size of $4*4*12$, thus more elements are involved in a transfer unit after $Pool_2$ for SPDZ computation. When $n=10$, a transfer unit after $Pool_2$ costs $5.98$ s, while a transfer unit after $Pool_4$ costs $1.33$ s.

\textsf{VerifyTL}. Fig.~\ref{Efficiencyperformance}(c) depicts the influence of the size of data domains on the training time. We notice that the training time of VerifyTL is increased as the growth of data domains.
It represents that more activation maps are transferred to tune more local CNN model with the increase of $\mathcal{D}$. When $Epoch=15$, the training time is 117.7 min with $n=5$, while the training time is 67.1 min with $n=2$.

In Fig.~\ref{Efficiencyperformance}(d), we notice that the running time of VerifyTL is larger than that of the proposed scheme over plaintexts, where $n=2$. When $Epoch=10$, the training time is 44.8 min and that of plaintexts is 19.2 min. This is because VerifyTL implements a transfer unit over secret shares with SPDZ computation to guarantee privacy and verification.
\subsection{Comparative Evaluations}
We compare VerifyTL with original learning without transfer units~\cite{lecun1998gradient}, federated learning~\cite{sattler2019robust} and cross-stitch transfer learning~\cite{misra2016cross}, where federated learning is a kind of distributed machine learning scheme.

\begin{table}[!ht]
\centering \caption{Test accuracy and training time comparison}
\label{AccComparison}
\tabcolsep 3pt
\begin{threeparttable}[b] %
\begin{tabular*}{3.5in}{l||c|c|c}
\toprule
  \hline
   {Method}&Accuracy &Training time&Threat model \\ \hline
  Original learning  & 74.6\% ({\color {red} 73.6\%})  &  \textbf{0.32 h} &---\\ \hline
  Cross-stitch~\cite{misra2016cross}& 87.4\% ({\color {red} 86.8\%})  &   0.37 h &---\\ \hline
  Federated learning~\cite{sattler2019robust}& 92.3\% ({\color {red}90.2\%})  &   14.8 h & Semi-honest\\ \hline
  VerifyTL& \textbf{98.2\% ({\color {red} 97.6\%})} & 1.31 h&\textbf{Covert} \\ \hline

  \bottomrule
\end{tabular*}
\end{threeparttable}
\begin{tablenotes}
\item \textbf{Notes}. Black text is test accuracy on MNIST, {\color {red}red text} is test accuracy on Fashion MNIST, where the size of training data on each domain is 1K samples, and $Epoch=10$. Original learning is a \textsf{Network \textrm{I}} model trained solely on a single data domain without any collaborative transfer units, where $n=1$. In~\cite{misra2016cross}, $n=2$. In~\cite{sattler2019robust} and VerifyTL, $n=5$.
\end{tablenotes}
\end{table}

Based on Table~\ref{AccComparison}, we conclude that VerifyTL provides a stronger security model and achieves outstanding accuracy results that can rival with other approaches.
The accuracy of the original learning with a \textsf{Network \textrm{I}} model without any collaborative transfer units is compared with VerifyTL. VerifyTL implements a significant accuracy improvement and provides privacy and verification with an acceptable training time.
Also, compared with~\cite{misra2016cross}, VerifyTL maintains outstanding accuracy and extends cross transfer learning from the two-domain setting to the multi-domain setting with strong privacy preservations, where \cite{misra2016cross} runs over plaintexts without privacy preservations.
Besides, VerifyTL performs better than federated learning~\cite{sattler2019robust} in both security, efficiency and effectiveness.
In federated learning~\cite{sattler2019robust}, a data domain is required to securely outsource trainable parameters at each layer to a semi-honest central server. There are total 3,646 parameters in a \textsf{Network \textrm{I}} model, which costs huge computation overhead for secure outsourcing. The reason for the computation overhead is that \cite{sattler2019robust} is based on Paillier cryptosystem, which involves more expensive exponent arithmetic to guarantee the privacy by encrypting the large size of transmitted CNN parameters during each training epoch. Unfortunately, federated learning cannot support covert security, of which the correctness of behaviours among distributed data domains and the central server cannot be guaranteed. Once a covert adversary corrupts $n-1$ data domains, it will lead to incorrect training to undermine the accuracy.

\section{Conclusion}
In this paper, we proposed a secure and verifiable collaborative transfer learning (VerifyTL) scheme. The scheme facilitates the collaborative transfer over extracted knowledge among multiple data domains in a strong privacy preserving manner, and allows verification against dishonest majority for implementing covert security. We mathematically proved the security of VerifyTL, as well as evaluating its performance using two real-world datasets MNIST and Fashion MNIST, \ie the performance gains with VerifyTL up to + 23.6\% for MNIST and +24.0\% for Fashion MNIST compared with original learning.

\section*{Acknowledgments}
This work was supported by the Key Program of NSFC (No. U1405255), the Shaanxi Science \& Technology Coordination \& Innovation Project (No. 2016TZC-G-6-3), the National Natural Science Foundation of China (No. 61702404, No. 61702105, No. U1804263), the China Postdoctoral Science Foundation Funded Project (No. 2017M613080), the Fundamental Research Funds for the Central Universities (No. JB171504, No. JB191506), the National Natural Science Foundation of Shaanxi Province (No. 2019JQ-005).
\bibliographystyle{IEEEtran}
\bibliography{bibfile}

\begin{IEEEbiography}[{\includegraphics[width=1in,height=1.25in,clip,keepaspectratio]{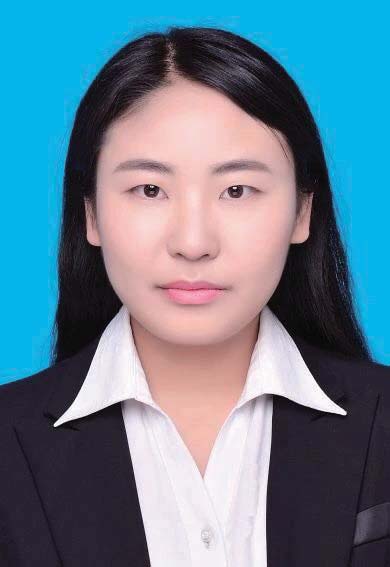}}]{Zhuoran Ma}
received the B.E. degree from the School of Software Engineering, Xidian University, Xi¡¯an, China, in 2017. She is currently a Ph.D candidate with the Department of Cyber Engineering, Xidian University. Her current research interests include data security and secure computation outsourcing.
\end{IEEEbiography}
\vspace{-11 mm}
\begin{IEEEbiography}[{\includegraphics[width=0.9in,height=1.25in,clip,keepaspectratio]{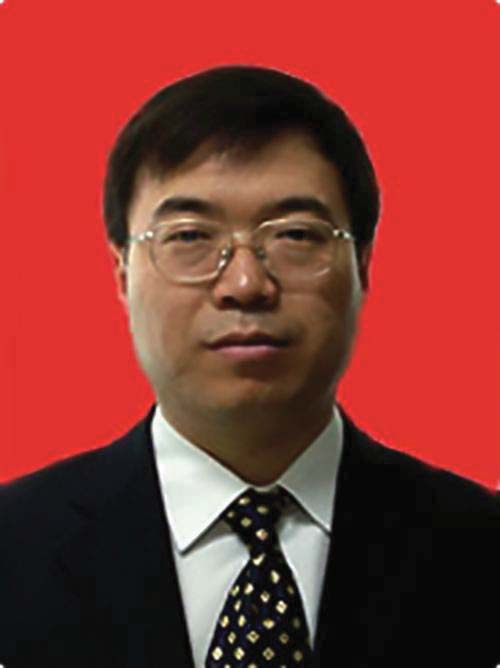}}]{Jianfeng Ma}
 received the Ph.D. degree in computer software and telecommunication engineering from Xidian University, Xi'an, China, in 1988 and 1995, respectively. From 1999 to 2001, he was a Research Fellow with Nanyang Technological University of Singapore. He is currently a professor and a Ph.D. Supervisor with the Department of Computer Science and Technology, Xidian University, Xi'an, Chian. He is also the Director of the Shaanxi Key Laboratory of Network and System Security. His current research interests include information and network security, wireless and mobile computing systems, and computer networks.
\end{IEEEbiography}
\vspace{-10 mm}

\begin{IEEEbiography}[{\includegraphics[width=0.9in,height=1.25in,clip,keepaspectratio]{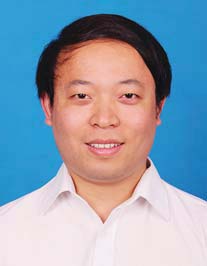}}]{Yinbin Miao}
 received the B.E. degree with the Department of Telecommunication Engineering from Jilin University, Changchun, China, in 2011, and Ph.D. degree with the Department of Telecommunication Engineering from xidian university, Xi'an, China, in 2016. He is currently a Lecturer with the Department of Cyber Engineering in Xidian university, Xi'an, China. His research interests include information security and applied cryptography.
\end{IEEEbiography}

\begin{IEEEbiography}[{\includegraphics[width=0.9in,height=1.25in,clip,keepaspectratio]{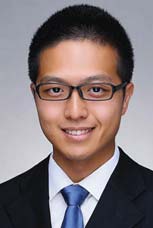}}]{Ximeng Liu} (S¡¯13-M¡¯16) received the B.Sc. degree in electronic engineering from Xidian University, Xi¡¯an, China, in 2010 and the Ph.D. degree in Cryptography from Xidian University, China, in 2015. Now he is the full professor in the College of Mathematics and Computer Science, Fuzhou University. Also, he was a research fellow at the School of Information System, Singapore Management University, Singapore. He has published more than 200 papers on the topics of cloud security and big data security including papers in IEEE TOC, IEEE TII, IEEE TDSC, IEEE TSC, IEEE IoT Journal, and so on. He awards ¡°Minjiang Scholars¡± Distinguished Professor, ¡°Qishan Scholars¡± in Fuzhou University, and ACM SIGSAC China Rising Star Award (2018). His research interests include cloud security, applied cryptography and big data security. He is a member of the IEEE, ACM, CCF.
\end{IEEEbiography}

\begin{IEEEbiography}[{\includegraphics[width=0.9in,height=1.25in,clip,keepaspectratio]{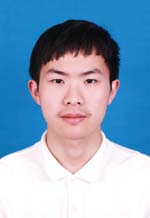}}]{Wei Zheng} will receive the B.E. degree with the Department of Information Security from Nanchang University, Nanchang, China, in 2020. He will pursue his M.E degree with the Department of Cyber Engineering from Xidian University, Xi¡¯an, China. His research interests include information security
and applied cryptography.
\end{IEEEbiography}

\begin{IEEEbiography}[{\includegraphics[width=1in,height=1.25in,clip,keepaspectratio]{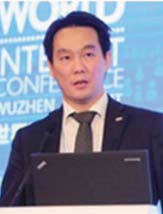}}]{Kim-Kwang Raymond Choo}
(SM'15) received the Ph.D. in Information Security in 2006 from Queensland University of Technology, Australia. He currently holds the Cloud Technology Endowed Professorship at The University of Texas at San Antonio (UTSA). He is the recipient of various awards including the UTSA College of Business Col. Jean Piccione and Lt. Col. Philip Piccione Endowed Research Award for Tenured Faculty in 2018, ESORICS 2015 Best Paper Award. He is an Australian Computer Society Fellow, and an IEEE Senior Member.
\end{IEEEbiography}

\begin{IEEEbiography}[{\includegraphics[width=1in,height=1.25in,clip,keepaspectratio]{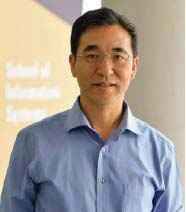}}]{Robert H. Deng}
(F'16) has been a Professor with the School of Information Systems, Singapore Management University, since 2004. His research interests include data security and privacy, multimedia  security, and network and system security. He was  an Associate Editor of the IEEE TRANSACTIONS ON {INFORMATION FORENSICS AND SECURITY} from  2009 to 2012. He is currently an Associate Editor of the IEEE TRANSACTIONS ON DEPENDABLE AND  SECURE COMPUTING and Security and Communication Networks (John Wiley). He is the cochair of the Steering Committee of the ACM Symposium on Information, Computer and Communications Security. He received the University Outstanding Researcher Award from the National University of Singapore in 1999 and the Lee Kuan Yew Fellow for Research Excellence from Singapore  Management University in 2006. He was named Community Service Star and Showcased Senior Information Security Professional by (ISC)$^2$ under its Asia-Pacific Information Security Leadership Achievements Program in 2010.
\end{IEEEbiography}

\end{document}